\newtheorem{thm}{Theorem}
\newtheorem{lem}{Lemma}
\newtheorem{prob}{Problem}
\newtheorem{assum}{Assumption}
\newtheorem{defn}{Definition}
\newtheorem{rem}{Remark}
\def\BibTeX{{\rm B\kern-.05em{\sc i\kern-.025em b}\kern-.08em
    T\kern-.1667em\lower.7ex\hbox{E}\kern-.125emX}}
\begin{document}

\title{$H_2$ suboptimal containment control of homogeneous and heterogeneous multi-agent systems
    }
\author{Yuan Gao, \IEEEmembership{Graduate Student Member, IEEE}, Junjie Jiao, \IEEEmembership{Member, IEEE},\\ Zhongkui Li, \IEEEmembership{Senior Member, IEEE}, and Sandra Hirche, \IEEEmembership{Fellow, IEEE} \\
\thanks{Y. Gao, J. Jiao, and S. Hirche are with the Chair of Information-oriented Control, TUM School of Computation, Information, and Technology, Technical University of Munich, 81669, Munich, Germany
        (Email: {\tt\small  ge54sem@tum.de; j.jiao@tum.de;  hirche@tum.de}).
}
\thanks{Z. Li is with the College of Engineering, Peking University, Beijing, 100871, China (Email:  {\tt\small  zhongkli@pku.edu.cn}).
}
    }

\maketitle

\begin{abstract}
This paper deals with the  $H_2$ suboptimal state containment control problem for homogeneous linear multi-agent systems and the $H_2$ suboptimal output containment control problem for heterogeneous linear multi-agent systems.
For both problems, given multiple autonomous leaders and a number of followers, we introduce suitable performance outputs and an associated $H_2$ cost functional, respectively. The aim is to design a distributed protocol by dynamic output feedback that achieves state/output containment control while the associated $H_2$ cost is smaller than an a priori given upper bound. 
To this end, we first show that  the $H_2$ suboptimal state/output containment control problem can be equivalently transformed into $H_2$ suboptimal control problems for a set of independent systems. Based on this, design methods are then provided to compute such distributed dynamic output feedback protocols. Simulation examples are provided to illustrate the performance of our proposed protocols.
\end{abstract}

\begin{IEEEkeywords}
Containment control, distributed control, $H_2$ optimal control, multi-agent systems, suboptimality.
\end{IEEEkeywords}

\section{Introduction}
\label{sec:introduction}
\IEEEPARstart{I}{n} the past two decades, a significant amount of attention has been given to distributed control of multi-agent systems due to its wide range of applications, e.g., formation control~\cite{fax2004information}, flocking \cite{olfati2006flocking}, smart grids \cite{dorfler2013synchronization}, and intelligent transport systems \cite{besselink2017string}.
One of the fundamental research problems in this framework is consensus \cite{olfati2004consensus}. Based on the number of leaders in the network, consensus problems can be categorized into leaderless consensus \cite{li2009consensus}, leader-follower consensus (one leader) \cite{ni2010leader}, and containment control (multiple leaders) \cite{li2013distributed}. 

In scenarios with multiple leaders, the problem of containment control arises, where the objective is to drive the states/outputs of the followers into a convex hull formed by the states/outputs of the leaders.  Containment control, which is inspired by natural phenomena, constitutes a crucial technique applicable to a diverse range of practical applications, including cooperative migration \cite{dong2015formation} and warehouse management \cite{zhou2022intelligent}. Existing literature has studied {\em state} containment control of {\em homogeneous} systems for single-integrators under fixed and switching network topologies with distributed static state protocol \cite{cao2009containment},  for double-integrators consisting of stationary and dynamic leaders with static control algorithms \cite{cao2010distributed}, and for general linear continuous and discrete systems under directed fixed topology by distributed protocols using static state feedback and dynamic output feedback  \cite{li2017cooperative}. %
However, the above works did not take into consideration performance criteria. 

In practice, agent dynamics are subjected to external disturbances, which can significantly deteriorate the performance of multi-agent systems. To address this issue, research has been focused on seeking performance requirements for {\em state} containment control of {\em homogeneous} multi-agent systems. For second-order multi-agent systems, a distributed static state control protocol is proposed in \cite{wang2015robust} to address a robust state containment control problem with a prescribed $H_\infty$ performance. For general linear multi-agent systems, a robust $H_\infty$ state containment control problem over Markovian switching topologies is solved in~\cite{wang2018h} using static state feedback. 
An observer-based state containment control protocol is proposed in~\cite{wang2020distributed} with communication time delay over switching topologies by guaranteeing certain $H_\infty$ performance. Compared to continuous-time systems, the finite-horizon $H_\infty$ containment control is investigated in~\cite{chen2018mathcal} for general discrete-time multi-agent systems with multiple leaders using an event-based distributed controller and a state observer. 
 The above-mentioned works only focus on the $H_\infty$ performance, which indicates the system's robustness to the worst-case scenario in the presence of external disturbances.

Meanwhile, in the context of {\em heterogeneous} multi-agent systems, the models of agents can vary due to heterogeneity in system dynamics and/or state dimensions. For heterogeneous multi-agent systems in which the followers have the same state dimensions, {\em state} containment control arises. For example, in~\cite{zheng2014containment}, state containment control is considered for a mixture of leaders with single-integrators and followers with double-integrators. More recently, using a cooperative output regulation framework, the state containment control problem by state feedback is studied in \cite{haghshenas2015containment} for linear high-order heterogeneous multi-agent systems with nonidentical followers and identical leaders. 
For general heterogeneous agents, however, even the state dimensions of the agents can vary. In this regard,  state containment control is not reasonable anymore, and {\em output} containment control should be considered. To address this problem,  distributed protocols using static state feedback and static output feedback are proposed in \cite{zuo2017output}, based on internal model principles, to achieve output containment control. Compared to \cite{zuo2017output},  a dynamic-output-based distributed protocol is designed in \cite{qin2018output}  to achieve output containment control with fixed and switching networks. 
Note that the proposed control protocols above did not address performance requirements in the presence of external disturbances.  

To address the above issue, research has been focused on seeking performance requirements for containment control of {\em heterogeneous} multi-agent systems. A distributed static output protocol is used in \cite{atrianfar2022robust} to address $H_\infty$   {\em state}  containment control problem with structured uncertainty and external disturbances. However, in \cite{atrianfar2022robust}, the state dimensions of the agent are required to be the same. In \cite{yuan2018h}, distributed control protocols via dynamic output feedback are established to achieve {\em output} containment control with optimized $H_\infty$ disturbance attenuation performance.   It's worth noting that these studies only consider the $H_\infty$ performance index for the worst-case scenario.

So far, in the literature,  little attention has been paid to designing distributed protocols for achieving containment control of multi-agent systems while considering the $H_2$ performance. It is worth mentioning that considerable effort has been invested in guaranteeing $H_2$ performance for leaderless and leader-follower consensus. The consensus control problem is explored for general homogeneous linear multi-agent systems with undirected graphs \cite{li2011h} and directed graphs  \cite{wang2014distributed}, focusing on $H_2$ performance regions.
Considering the minimization of a given $H_2$ cost criterion instead of the $H_2$ performance region,  suboptimal distributed protocols by static state feedback in \cite{jiao2018suboptimality} and by dynamic output feedback in \cite{jiao2020suboptimality, yuan2023I} are established for homogeneous multi-agent systems. Later on,  the results of \cite{jiao2020suboptimality} are extended to the case of heterogeneous multi-agent systems and an $H_2$ suboptimal dynamic protocol is proposed in \cite{jiao2021h2} to achieve output consensus. Overall, the above works regarding $H_2$ performance did not consider the case of multi-agent systems with multiple leaders, i.e., the $H_2$ containment control problem.

\subsection{Contribution and Structure}
Motivated by the above, the present paper deals with the problem of $H_2$ optimal {\em state} containment control problem for {\em homogeneous} linear multi-agent systems and the problem of $H_2$ optimal {\em output} containment control  for {\em heterogeneous} linear multi-agent systems. 
The objective is to design distributed protocols by dynamic 
 output feedback that achieves state/output containment control while minimizing an associated $H_2$ cost functional. Due to the communication constraints among the agents, however, these problems are non-convex in general. A closed-form solution has not yet been given in the literature and may not even exist. Hence, the present paper then addresses an alternative form of these two problems that  involves   {\em suboptimality}. 
More concretely, we aim at finding  distributed protocols by dynamic output feedback that achieve state/output containment control, respectively, while guaranteeing the associated $H_2$ cost is smaller than an a priori given upper bound. The main contributions of this paper are as follows.
\begin{enumerate}
    \item For the homogeneous multi-agent system case, we present a novel distributed dynamic protocol for  achieving $H_2$ suboptimal state containment control by using output feedback. This generalizes our previous results using static state feedback in \cite{yuan2023} by modifying the protocol from \cite{trentelman2013robust} and extends the separation principle results in \cite{jiao2020suboptimality}.
    \item We then extend the results on $H_2$ suboptimal state containment control by dynamic output feedback for homogeneous systems to those on $H_2$ suboptimal output containment control for heterogeneous systems. 
	\item For both problems, by introducing suitable performance outputs, we show that the $H_2$ suboptimal state/output containment control problem can be equivalently recast as   $H_2$ suboptimal control problems of a set of independent systems.
	\item  We then propose design methods for computing $H_2$ suboptimal distributed dynamic output feedback protocols for homogeneous and heterogeneous multi-agent systems, respectively.
\end{enumerate}

The rest of this paper is organized as follows. 
In Section~\ref{sec_preliminaries}, basic notations and graph theories are reviewed, and an ${H}_2$ suboptimal control problem by dynamic output feedback is studied. 
In Section \ref{sec_homogeneous}, we first formulate the distributed $H_2$ suboptimal state containment control problem by dynamic output feedback and  then provide a design method for computing one such protocol. 
After that, in Section \ref{sec_heterogenous}, the distributed $H_2$ suboptimal output containment control problem is first  formulated and  a design procedure for computing a suboptimal dynamic protocol is then proposed. Section \ref{sec_examples} contains two simulation examples to illustrate the performance of our proposed protocols for both cases. Finally, Section \ref{sec_conclusion} concludes this paper.

\section{ Preliminaries}\label{sec_preliminaries}
\subsection{Notation and Graph Theory}\label{subsec_notation}
In this paper, $\mathbb{R}$ represents the field of real numbers, $\mathbb{R}^n$ represents the space of $n$ dimensional real vectors, and $\mathbb{R}^{m \times n}$ represents the space of $m\times n$ real matrices. The identity matrix of size $n\times n$ is represented by $I_n$. The superscript $\top$ means the transpose of a real vector or matrix. We use ${\rm tr} (A)$ to denote the trace of the square matrix $A$. A matrix is called Hurwitz (or stable) if all its eigenvalues have negative real parts. For a symmetric matrix $P$, we denoted $P > 0$ if $P$ is positive definite and $P < 0$ if $P$ is negative definite. We use ${\rm diag}(d_1, \dots, d_n)$ to denoted the $n\times n$ diagonal matrix with $d_1, \dots, d_n$ on the diagonal. For matrices $M_1,\dots, M_m$, the block diagonal matrix with diagonal blocks $M_i$ is denoted by $ {\rm {\rm blockdiag}}(M_1,\dots, M_m)$. The Kronecker product of matrices $A$ and $B$ is represented by $A\otimes B$. 
For a set $X = \left\{x_1, ... , x_n\right\}$ in $V \subseteq \mathbb{R}^p$, its convex hull ${\rm co}(X)$ is defined as 
\begin{equation*}
    {\rm co}(X) = \left\{\sum_{i=1}^n  \alpha_i x_i\ |\  x_i \in V, \alpha_i \geq 0, \sum_{i=1}^n \alpha_i = 1\right\}.
\end{equation*} 

In graph theory, a directed graph is denoted by $\mathcal{G} = (\mathcal{V}, \mathcal{E})$, where $\mathcal{V} = \{ 1,\ldots, N \}$ is the node set  and $\mathcal{E} = \{ {e_1},\ldots, {e_M} \}$ is the edge set satisfying $\mathcal{E} \subset \mathcal{V} \times \mathcal{V}$.  An edge from node $i$ to node $j$ is represented by the pair $(i,j) \in \mathcal{E}$. A graph is said to be undirected if $(i, j) \in \mathcal{E}$ implies $(j, i) \in \mathcal{E}$. A graph is simple if $(i, i) \notin \mathcal{E}$ which means no self-loops. The adjacency matrix $\mathcal{A} = [a_{ij}] \in \mathbb{R}^{N\times N}$ with non-negative elements $a_{ij}$ is defined as follows: $a_{ii} = 0$, $a_{ij} = 1$ if $(j, i) \in \mathcal{E}$, and $a_{ij} = 0$ otherwise. Subsequently, the Laplacian matrix $L = [L_{ij}] \in \mathbb{R}^{N\times N}$ of $\mathcal{G}$ is defined as $L_{ii} = \sum_{j=1}^{N} a_{ij}$ and $L_{ij} = - a_{ij}$. 
Note that $L = \mathcal{D} - \mathcal{A}$, where $\mathcal{D} = {\rm diag}(d_1,\dots, d_N)$ is the degree matrix of $\mathcal{G}$ with $d_{i} = \sum_{j=1}^{N} a_{ij}$. 

\subsection{\texorpdfstring{${H_2}$}~ Suboptimal Control  by Dynamic Output Feedback for Linear Systems}\label{subsec_single_sys}

This subsection considers the  ${H}_2$ suboptimal control by dynamic output feedback for a single linear system. In particular, we generalize the results in  \cite[Lemma 2]{jiao2020suboptimalityarXiv}.

Consider the linear system
\begin{equation}\label{sys_xyz}
	\begin{aligned} 
		\dot{x} & = \bar{A} x + \bar{B} u + \bar{E} d,\\
		y &= \bar{C}_1 x + \bar{D}_1 d, \\
		z &= \bar{C}_2 x + \bar{D}_2 u,
	\end{aligned}
\end{equation}
where $x \in \mathbb{R}^n$, $u \in \mathbb{R}^m$, $d \in \mathbb{R}^q$, $y \in \mathbb{R}^r$, and $z \in \mathbb{R}^p$ are, respectively, the state, the coupling input, the unknown external disturbance, the measured output, and the output to be controlled. The matrices $\bar{A}$, $\bar{B}$, $\bar{C_1}$, $\bar{C_2}$, $\bar{D_1}$, $\bar{D_2}$, and $\bar{E}$ are of suitable dimensions. Throughout this subsection, we assume that the pair $(\bar{A}, \bar{B})$ is stabilizable and that the pair $(\bar{C}_1, \bar{A})$ is detectable.
Next, we consider the case that the system \eqref{sys_xyz} is controlled by a dynamic output feedback controller 
\begin{equation}\label{dyna_w}
	\begin{aligned}
		\dot{w} &= \bar{A} w + \bar{B} u + G \left(\bar{C}_1 w - y\right), \\
		u &= F w,
	\end{aligned}
\end{equation}
where  $w \in \mathbb{R}^n$ is the state of the controller and $F \in \mathbb{R}^{m \times n}$, $G \in \mathbb{R}^{n \times r}$ are feedback gain matrices to be designed. By interconnecting the system \eqref{sys_xyz} and the controller \eqref{dyna_w}, the controlled system is as follows
\begin{equation}\label{sys_dyna_w}
	\begin{aligned}
		\begin{bmatrix}
			\dot{x}\\
			\dot{w}
		\end{bmatrix} 
		& =
		\begin{bmatrix}
			\bar{A} & \bar{B}F \\
			G \bar{C}_1 & \bar{A} +  \bar{B}F + G \bar{C}_1  
		\end{bmatrix} 
		\begin{bmatrix}
			x \\
			w
		\end{bmatrix}
		+
		\begin{bmatrix}
			\bar{E} \\
			-G \bar{D}_1	
		\end{bmatrix}d ,
		\\
		z & =
		\begin{bmatrix}
			\bar{C}_2 & \bar{D}_2 F 
		\end{bmatrix}
		\begin{bmatrix}
			x \\
			w
		\end{bmatrix}.
	\end{aligned} 
\end{equation}
Denote ${A}_a  =  	
	\begin{bmatrix}
		\bar{A} & \bar{B}F \\
		G \bar{C}_1 & \bar{A} + \bar{B}F +G \bar{C}_1 
	\end{bmatrix}$, ${E}_a  =
	\begin{bmatrix}
		\bar{E} \\
		-G \bar{D}_1	
	\end{bmatrix}$, and
	${C}_a = 
	\begin{bmatrix}
		\bar{C}_2 & \bar{D}_2 F 
	\end{bmatrix}.$ 
Then the impulse response matrix from disturbance $d$ to output $z$ is equal to ${T}_{F, G}(t) = {C}_a e^{{A}_a t}{E}_a $.
Subsequently, the associated ${H}_2$ cost functional is given by
	$J(F, G)  := \int_{0}^{\infty} 
	\text{tr} \left[ {T}_{F, G}^{\top}(t) {T}_{F, G}(t) \right] dt.$

The ${H}_2$ suboptimal control problem by dynamic output feedback for the linear system \eqref{sys_xyz} is the problem of finding a controller \eqref{dyna_w} which internally stabilizes the controlled system \eqref{sys_dyna_w} while the associated cost $J(F, G)$ is smaller than an a priori given upper bound.

The following lemma provides conditions under which a controller \eqref{dyna_w} is $H_2$ suboptimal for the system \eqref{sys_xyz}.
\begin{lem}\label{lem_1}
	Let $\gamma > 0$ be a given tolerance.	Assume that $\bar{D}_1\bar{E}^{\top}  =0$, $\bar{D}_2^{\top} \bar{C}_2 =0$, $\bar{D}_1  \bar{D}_1^{\top} >0$, and $\bar{D}_2^\top \bar{D}_2 >0$. 
	Let  $F\in \mathbb{R}^{m \times n}$. Suppose that there exists $P > 0$ and $Q > 0$ satisfying
	\begin{equation*}\label{are_P}
 \begin{aligned}
&	(\bar{A} + \bar{B} F)^\top P + P (\bar{A} + \bar{B} F) \\
&\qquad\qquad\qquad\qquad + (\bar{C}_2 + \bar{D}_2 F)^\top  (\bar{C}_2 + \bar{D}_2 F)   < 0,  
 \end{aligned} 
	\end{equation*}
	\begin{equation*}\label{are_Q}
	\bar{A} Q + Q \bar{A}^\top - Q \bar{C}_1^\top(\bar{D}_1\bar{D}_1^{\top})^{-1}  \bar{C}_1 Q + \bar{E} \bar{E}^\top  <0.
	\end{equation*}
	If, moreover, the inequality 
	\begin{equation*}\label{gamma_P_Q}	
		{\rm tr} \left(\bar{C}_1Q   P Q\bar{C}_1^\top (\bar{D}_1\bar{D}_1^{\top})^{-1} \right) + {\rm tr} \left( \bar{C}_2 Q  \bar{C}_2^\top  \right) < \gamma
	\end{equation*}
	holds, then the controller \eqref{dyna_w} with  $F$ and  $G =- Q\bar{C}_1^\top(\bar{D}_1\bar{D}_1^{\top})^{-1}$  internally stabilizes the system~\eqref{sys_dyna_w} and  $J(F, G) <\gamma$.
\end{lem}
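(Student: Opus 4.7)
The plan is to combine a coordinate change that exposes the separation-principle structure with an explicit super-solution of the closed-loop gramian Lyapunov equation. First I would introduce the estimation error $e = x - w$ and rewrite~\eqref{sys_dyna_w} in the coordinates $(x,e)$. Using $u = Fw = F(x-e)$, a direct calculation gives a block upper-triangular closed-loop matrix with diagonal blocks $\bar A + \bar B F$ and $\bar A + G\bar C_1$, disturbance channel $\bar E + G\bar D_1$ into the $e$-dynamics, and performance output $z = (\bar C_2 + \bar D_2 F)x - \bar D_2 F e$. Internal stability therefore reduces to Hurwitzness of the two diagonal blocks. The first Riccati-type inequality is already a Lyapunov inequality and gives $\bar A + \bar B F$ Hurwitz. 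For the second, substituting $G = -Q\bar C_1^\top(\bar D_1\bar D_1^\top)^{-1}$ and using $\bar D_1\bar E^\top = 0$ rewrites the hypothesis as
\begin{equation*}
(\bar A + G\bar C_1)Q + Q(\bar A + G\bar C_1)^\top + (\bar E + G\bar D_1)(\bar E + G\bar D_1)^\top < 0,
\end{equation*}
a standard Lyapunov inequality certifying $\bar A + G\bar C_1$ Hurwitz.

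For the $H_2$ bound the key idea is to build a super-solution $\mathcal W_u$ of the closed-loop controllability gramian Lyapunov inequality $\mathcal A_T \mathcal W_u + \mathcal W_u \mathcal A_T^\top + \mathcal E_T \mathcal E_T^\top \le 0$, where $\mathcal A_T$, $\mathcal E_T$, $\mathcal C_T$ denote the transformed closed-loop data. Hurwitzness of $\mathcal A_T$ then forces $\mathcal W_u \ge \mathcal W$ for the true gramian $\mathcal W$ and hence $J(F,G) \le {\rm tr}(\mathcal C_T \mathcal W_u \mathcal C_T^\top)$. Guided by the LQG separation principle I would try
\begin{equation*}
\mathcal W_u = \begin{bmatrix} W_0 + Q & Q \\ Q & Q \end{bmatrix},
\end{equation*}
with $W_0 \ge 0$ the unique solution of $(\bar A + \bar B F) W_0 + W_0(\bar A + \bar B F)^\top + G\bar D_1\bar D_1^\top G^\top = 0$. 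A block-by-block verification, relying on the identity $\bar C_1 Q = -\bar D_1\bar D_1^\top G^\top$ from the definition of $G$ together with $\bar D_1\bar E^\top = 0$, shows that the $(1,2)$ and $(2,2)$ blocks of $\mathcal A_T \mathcal W_u + \mathcal W_u \mathcal A_T^\top + \mathcal E_T \mathcal E_T^\top$ both collapse to the left-hand side of the given $Q$-Riccati inequality and are hence strictly negative; the $(1,1)$ block is then handled by the defining equation of $W_0$ through a Schur-complement argument, completing the super-solution property.

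Once $\mathcal W_u$ is in hand I would compute ${\rm tr}(\mathcal C_T \mathcal W_u \mathcal C_T^\top)$ by direct expansion; the $\pm \bar C_2 Q F^\top \bar D_2^\top$ cross terms cancel algebraically, leaving the clean decomposition
\begin{equation*}
{\rm tr}(\mathcal C_T \mathcal W_u \mathcal C_T^\top) = {\rm tr}\!\bigl(W_0 (\bar C_2 + \bar D_2 F)^\top(\bar C_2 + \bar D_2 F)\bigr) + {\rm tr}(\bar C_2 Q \bar C_2^\top).
\end{equation*}
To match the stated bound I would invoke the trace identity for the Lyapunov operator with Hurwitz $A$: if $A^\top X + XA + Q_1 \le 0$ and $AY + YA^\top + R_1 = 0$ with $Y \ge 0$, then ${\rm tr}(YQ_1) \le {\rm tr}(XR_1)$. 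Applied with $A = \bar A + \bar B F$, $X = P$, $Y = W_0$, $Q_1 = (\bar C_2 + \bar D_2 F)^\top(\bar C_2 + \bar D_2 F)$, $R_1 = G\bar D_1\bar D_1^\top G^\top$, this bounds the first summand by ${\rm tr}(P\,G\bar D_1\bar D_1^\top G^\top) = {\rm tr}(\bar C_1 Q P Q\bar C_1^\top(\bar D_1\bar D_1^\top)^{-1})$, the last equality being a consequence of the definition of $G$. Combining with the assumed cost hypothesis then yields $J(F,G) < \gamma$.

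The main obstacle I expect is guessing the right shape of $\mathcal W_u$; once it is in place every remaining step is routine linear-algebra bookkeeping, with the two standard $H_2$ decoupling assumptions $\bar D_1\bar E^\top = 0$ and $\bar D_2^\top \bar C_2 = 0$ performing their usual role of keeping the disturbance covariance and the cost expansion free of awkward cross terms.
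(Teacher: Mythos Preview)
Your argument is correct and complete. The paper itself omits the proof of this lemma, deferring to \cite[Lemma~2]{jiao2020suboptimalityarXiv}; your separation-principle coordinate change together with the explicit gramian super-solution $\mathcal W_u=\begin{bmatrix}W_0+Q & Q\\ Q & Q\end{bmatrix}$ is precisely the standard route such a reference would take, so there is nothing to compare against here beyond noting agreement in spirit.

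One small simplification you may wish to record: once you substitute the defining equation for $W_0$ and the identity $G\bar D_1\bar D_1^\top G^\top = Q\bar C_1^\top(\bar D_1\bar D_1^\top)^{-1}\bar C_1 Q$, \emph{all four} blocks of $\mathcal A_T\mathcal W_u+\mathcal W_u\mathcal A_T^\top+\mathcal E_T\mathcal E_T^\top$ turn out to equal the same symmetric matrix $\Sigma:=\bar A Q+Q\bar A^\top-Q\bar C_1^\top(\bar D_1\bar D_1^\top)^{-1}\bar C_1 Q+\bar E\bar E^\top<0$, so the whole block matrix factors as $\begin{bmatrix}I\\ I\end{bmatrix}\Sigma\begin{bmatrix}I & I\end{bmatrix}\le 0$ directly, and no Schur-complement step is needed for the $(1,1)$ block. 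Also, your cross-term cancellation in $\mathrm{tr}(\mathcal C_T\mathcal W_u\mathcal C_T^\top)$ is purely algebraic (it uses only $C_F-\bar D_2 F=\bar C_2$) and does not actually require $\bar D_2^\top\bar C_2=0$; that hypothesis and $\bar D_2^\top\bar D_2>0$ are carried in the statement for consistency with the rest of the paper but are not invoked in this particular bound.
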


The detailed proof of Lemma~\ref{lem_1}  follows from that of~\cite[Lemma 2]{jiao2020suboptimalityarXiv} and is omitted here.
Indeed, Lemma \ref{lem_1} is a generalization of~\cite[Lemma 2]{jiao2020suboptimalityarXiv} with a relaxed assumption $\bar{D}_1  \bar{D}_1^{\top} >0$. 

\section{\texorpdfstring{${H_2}$}~  suboptimal state containment control of homogeneous linear multi-agent systems}\label{sec_homogeneous}
In this section, we consider the $H_2$ suboptimal state
containment control problem by dynamic output feedback for homogeneous multi-agent systems. 
We  first formulate the problem in Subsection \ref{subsec_problem_homo}. Then in Subsection \ref{subsec_design_homo}, we solve this problem by designing a  distributed $H_2$ suboptimal control protocol.

\subsection{Problem Formulation}\label{subsec_problem_homo}
Consider a multi-agent system with  $N$ agents consisting of $M$ followers subjected to external disturbances and $N-M$ autonomous leaders. 
For clarity and ease of reference, without loss of generality, we assign the labels $1$ through $M$ to the followers and the labels $M+1$ through $N$ to the leaders, respectively. 
We denote the follower set to be $\mathcal{F} \overset{\Delta}{=} \left\{1, \ldots, M\right\}$, while the   leader set is denoted by  $\mathcal{L} \overset{\Delta}{=} \left\{M + 1, ..., N\right\}$.  

The dynamics of the $i$th leader is represented by the linear time-invariant  system 
\begin{equation}
	\begin{aligned} \label{containment_leadersdynamic}
		\dot{x}_i\left(t\right) &= A  x_i\left(t\right),\\
		y_i\left(t\right) &= C_1 x_i\left(t\right), \\
		z_i\left(t\right) &= C_2 x_i\left(t\right),
	\end{aligned}
	\qquad i \in  \mathcal{L},
\end{equation} 
and the dynamics of the $i$th followers is denoted by
\begin{equation}\label{containment_followersdynamic}
	\begin{aligned} 
		\dot{x}_i\left(t\right) & = A x_i\left(t\right)  + B u_i\left(t\right)  + E d_i\left(t\right),\\
		y_i\left(t\right) & = C_1 x_i\left(t\right)  + D_1 d_i\left(t\right),\\
		z_i\left(t\right) & = C_2 x_i\left(t\right)  + D_2 u_i\left(t\right),
	\end{aligned}\qquad i \in \mathcal{F},
\end{equation}
where $x_i \in \mathbb{R}^n$, $u_i \in \mathbb{R}^m$, $d_i \in \mathbb{R}^q$, $y_i \in \mathbb{R}^r$ and $z_i \in \mathbb{R}^p$ are, respectively, the state, the coupling input, the unknown external disturbance, the measured output, and the output to be controlled. The matrices $A$, $B$, $C_1$, $C_2$, $D_1$, $D_2$, and $E$ are of compatible dimensions. Throughout this section, it is assumed that the pair $\left(A, B\right)$ is stabilizable and the pair $\left(C_1, A\right)$ is detectable. 

In this section, we also assume that each follower has access to the relative output measurements of its neighbors and consider the state containment problem using dynamic output feedback. Thus, we consider the case that the leaders \eqref{containment_leadersdynamic} and followers \eqref{containment_followersdynamic} are interconnected by a distributed observed-based  dynamic output feedback protocol of the form 
\begin{equation}\label{containment_dynamicprotocol}
\begin{aligned}
	\dot{w}_i & =  A w_i + B F\left( \sum_{j=1}^M  a_{ij}(w_i-w_j) + \sum_{j={M+1}}^N  a_{ij} w_i \right)  \\
  & 
  \qquad + G\left( C_1w_i-  \sum_{j=1}^N a_{ij} (y_i -y_j) \right),\\
		u_i & = F w_i, \qquad i \in \mathcal{F},
\end{aligned}
\end{equation} 
where $G \in \mathbb{R}^{n \times r}$ and $F \in \mathbb{R}^{m \times n}$ are local feedback gains  to be designed, $w_i$ is the state of the protocol,
and $a_{ij}$ represents the $ij$-th entry of the adjacency matrix $\mathcal{A}$ associated with graph $\mathcal{G}$ which satisfies the following assumption.
\begin{assum}\label{Assumation 1}
The leaders receive no information from any followers. However, each of the leaders shares its  information to at least one of the followers. The communication graph between the $M$ followers is connected, simple, and undirected.
\end{assum}

Accordingly, the Laplacian matrix associated with graph $\mathcal{G}$  can   be partitioned as
\begin{equation}\label{laplacian}
	L = \left[
	\begin{array}{cc}
		L_1 & L_2\\
		0_{(N-M)\times M} & 0_{(N-M)\times (N-M)}
	\end{array}
	\right],
\end{equation}
where ${L}_1 \in \mathbb{R}^{M\times M}$ and ${L}_2 \in \mathbb{R}^{M\times\left(N-M\right)}$.

Denote $\bm{x_f} = \left[x_1^{\top}, \ldots, x_M^{\top}\right]^{\top}$, $\bm{x_l} =\left[x_{M+1}^{\top}, \ldots, x_N^{\top}\right]^{\top}$, $\bm{y_f}$\\
$=\left[y_1^{\top}, \ldots, y_M^{\top}\right]^{\top}$,  $\bm{y_l} =\left[y_{M+1}^{\top}, \ldots, y_N^{\top}\right]^{\top}$, $\bm{u} =
[u_1^{\top},\ldots,$\\$ u_M^{\top}]^{\top}$, $\bm{d} =\left[d_1^{\top}, \ldots, d_M^{\top}\right]^{\top}$, $\bm{w_f} = \left[w_1^{\top}, \ldots, w_M^{\top}\right]^{\top}$, $\bm{z_f} = \left[z_1^{\top}, \ldots, z_M^{\top}\right]^{\top}$, and $\bm{z_l} =[z_{M+1}^{\top},\ldots,z_N^{\top}]^{\top}$. 
We can then write the agents dynamics \eqref{containment_leadersdynamic} and \eqref{containment_followersdynamic} in compact form as
\begin{equation}\label{compact_systems_dynamic}
	\begin{aligned} 
		\bm{\dot{x}_l} &= (I_{N-M} \otimes A) \bm{x_l},\\
		\bm{y_l} &= (I_{N-M} \otimes C_1)\bm{x_l}, \\
		\bm{z_l} &= (I_{N-M} \otimes C_2)\bm{x_l}, \\
		\bm{\dot{x}_f} &= (I_{M} \otimes A) \bm{x_f} + (I_M \otimes B) \bm{u} + (I_M \otimes E) \bm{d},\\
		\bm{y_f} &= (I_M \otimes C_1)\bm{x_f} + (I_M \otimes D_1) \bm{d}\\
		\bm{z_f} &= (I_M \otimes C_2)\bm{x_f} + (I_M \otimes D_2) \bm{u}.
	\end{aligned}
\end{equation} 
Correspondingly, the protocol \eqref{containment_dynamicprotocol} can be written as
\begin{equation}\label{containment_dynamicprotocol_compact}
	\begin{aligned}
		\bm{\dot{w}_f} &   = ( I_{M} \otimes (A + GC_1)) \bm{w_f} + ({L_1} \otimes BF) \bm{w_f}   \\ & \qquad - ({L_1} \otimes G )\bm{y_f} - ({L_2} \otimes G )\bm{y_l} ,\\
		\bm{u} & =(I_{M} \otimes F)\bm{w_f}.
	\end{aligned}
\end{equation}
First of all, we want the dynamic protocol \eqref{containment_dynamicprotocol} to achieve   state containment control   for agents \eqref{containment_leadersdynamic} and \eqref{containment_followersdynamic} without taking into account external disturbances. Here, state containment control means that the states of the followers   converge to the convex hull $\boldsymbol{\omega_x}$ of the states of the leaders~\cite{li2013distributed}, %
which is defined as 
	\begin{equation}\label{convex_hull}
		\boldsymbol{\omega_x}(t) \overset{\Delta}{=} \left(-{L}_1^{-1} {L}_2 \otimes e^{At}\right)
		\left[
		\begin{array}{c}
			x_{M+1}\left(0\right)\\
			\vdots\\
			x_N\left(0\right)
		\end{array}
		\right],
	\end{equation}
where $x_{M+1}\left(0\right), \ldots,  x_N\left(0\right)$ are the initial states of the leaders and the sum of each row of $-{L}_1^{-1} {L}_2$ equal to $1$ which is derived from the following lemma.
\begin{lem}[\cite{meng2010distributed}]\label{lemma8}
	Under Assumption \ref{Assumation 1},  ${L}_1$ is positive definite and each row of $-{L}_1^{-1} {L}_2$ has its sum equal to $1$.
\end{lem}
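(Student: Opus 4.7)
The plan is to exploit the block-triangular form of $L$ in \eqref{laplacian} together with the standard fact that the full Laplacian satisfies $L\mathbf{1}_N = 0$. Reading off the first block row, this gives $L_1 \mathbf{1}_M + L_2 \mathbf{1}_{N-M} = 0$, so once $L_1$ has been shown to be invertible it follows immediately that $\mathbf{1}_M = -L_1^{-1}L_2 \mathbf{1}_{N-M}$, which is exactly the statement that the rows of $-L_1^{-1}L_2$ sum to one. Hence the entire argument reduces to establishing that $L_1$ is positive definite.

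To prove positive definiteness, I would decompose $L_1 = L_f + D_\ell$, where $L_f \in \mathbb{R}^{M\times M}$ is the Laplacian of the subgraph induced by the $M$ followers and $D_\ell = \mathrm{diag}(\delta_1,\dots,\delta_M)$ with $\delta_i = \sum_{j \in \mathcal{L}} a_{ij}$ collecting the weights of the edges entering follower $i$ from the leaders. This decomposition follows directly from the definition of the Laplacian and the partition of $\mathcal{V}$ into $\mathcal{F}$ and $\mathcal{L}$. Under Assumption \ref{Assumation 1}, the follower subgraph is undirected and connected, so $L_f$ is symmetric positive semidefinite with $\ker L_f = \mathrm{span}\{\mathbf{1}_M\}$; and $D_\ell$ is diagonal with nonnegative entries, at least one of which is strictly positive because every leader points to some follower.

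To conclude, I would take any $x \in \mathbb{R}^M$ with $x^\top L_1 x = 0$. Since both $L_f$ and $D_\ell$ are positive semidefinite, this forces $x^\top L_f x = 0$ and $x^\top D_\ell x = 0$ simultaneously. The first equality together with connectivity yields $x = c\mathbf{1}_M$ for some scalar $c$; the second then gives $c^2 \sum_{i} \delta_i = 0$, and since at least one $\delta_i > 0$ we must have $c = 0$, so $x = 0$. Hence $L_1 \succ 0$, which closes the argument.

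The only real subtlety is the last step: one has to be careful that the assumption ``each leader shares its information with at least one follower'' indeed guarantees $D_\ell \neq 0$, and that this single nonzero diagonal entry is enough (combined with connectivity of the follower graph) to rule out a nontrivial kernel. No additional reachability hypothesis on individual followers is needed, because the only candidate null vector is already pinned down to be a scalar multiple of $\mathbf{1}_M$ by the follower-graph connectivity.
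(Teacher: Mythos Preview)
The paper does not supply its own proof of this lemma; it is quoted directly from \cite{meng2010distributed} without argument. Your proposal is a correct, self-contained proof and is essentially the standard one. The decomposition $L_1 = L_f + D_\ell$ is exactly right: under Assumption~\ref{Assumation 1} the follower subgraph is undirected, so $L_f$ is symmetric and $L_1$ inherits symmetry, making the quadratic-form argument legitimate. Connectivity of the follower graph pins any null vector of $L_f$ to $\mathrm{span}\{\mathbf{1}_M\}$, and the single nonzero entry of $D_\ell$ then kills that direction; your handling of this last subtlety is accurate. The row-sum identity via $L\mathbf{1}_N = 0$ and the block partition is immediate once invertibility is in hand.
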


\begin{defn}\label{definition2}
	The protocol \eqref{containment_dynamicprotocol_compact} is said to achieve   state containment control for   multi-agent system \eqref{compact_systems_dynamic} if the states of the followers $\bm{x_f}$ converge into the convex hull formed by the states of the leaders $\bm{x_l}$ and  the protocol state   $\bm{w_f}$  goes to zero, i.e.,  $\bm{x_f}(t) \rightarrow \boldsymbol{\omega_x}(t)$ and $\bm{w_f}(t) \rightarrow 0$ as $t \rightarrow \infty$. 
\end{defn}

To proceed, we introduce a new error state variable for each follower as $\xi_{xi} = \sum_{j=1}^N a_{ij} (x_i - x_j), i \in \mathcal{F},$
and a new error state variable for each protocol of followers as $\xi_{wi} = \sum_{j=1}^M a_{ij} (w_i - w_j)+\sum_{j={M+1}}^N  a_{ij} w_i, i \in \mathcal{F}.$
Denote $\bm{\xi_x} =\left[\xi_{x1}^{\top}, \ldots, \xi_{xM}^{\top}\right]^{\top}$ and $\bm{\xi_w} =\left[\xi_{w1}^{\top}, \ldots, \xi_{wM}^{\top}\right]^{\top}$,
 we then have
\begin{equation*}\label{xi_dynamic}
    \begin{aligned}
	\bm{\xi_x} &= (L_1 \otimes I_n) \bm{x_f} + (L_2 \otimes I_n) \bm{x_l}, \\
	\bm{\xi_w} &= (L_1 \otimes I_n) \bm{w_f}.
    \end{aligned}
\end{equation*}
Note that, whenever $\bm{\xi_x}$  and $ \bm{\xi_w}$ converge to $0$, we have $\bm{x_f}$ tends to $ \left(-{L}_1^{-1} {L}_2 \otimes I_n\right)\bm{x_l}$ and $\bm{w_f}$ converges to 0, i.e., $\bm{x_f}(t) \rightarrow \bm{\omega_x}(t)$ and $\bm{w_f}(t) \rightarrow 0$ as $t \rightarrow \infty$. Consequently, state containment control is achieved.

Meanwhile, in the context of distributed control of multi-agent systems, we are interested in the differences between the output values of the leaders and the followers. Therefore, we introduce the performance output variable $\epsilon_i = \sum_{j=1}^N a_{ij} (z_i - z_j), i \in \mathcal{F}$ that captures the differences among the agents according to the communication graph.
Denote $\bm{\epsilon} =\left[\epsilon_1^{\top}, \ldots, \epsilon_M^{\top}\right]^{\top}$, we then have
\begin{equation*}
	\bm{\epsilon} = (L_1 \otimes I_p) \boldsymbol{z_f} + (L_2 \otimes I_p) \boldsymbol{z_l}.
\end{equation*} 
Thus, the performance output variable $\bm{\epsilon}$ reflects the output disagreements between the leaders and followers.
By using the new state variables $\bm{\xi_x}$, $\bm{\xi_w}$ and  performance output variable $\bm{\epsilon}$, the dynamics of the error system can be written in compact form as 
\begin{equation*}
	\begin{aligned}  
		\bm{\dot{\xi}_x} & = 	(L_1 \otimes I_n) 	\bm{\dot{x}_f} + (L_2 \otimes I_n) 	\bm{\dot{x}_l},\\
		\bm{\dot{\xi}_w} & = 	(L_1 \otimes I_n) 	\bm{\dot{w}_f},\\
		\bm{\epsilon} &= (L_1 \otimes I_p) 	\boldsymbol{z_f} + (L_2 \otimes I_p) 	\boldsymbol{z_l}.
	\end{aligned}
\end{equation*}
By interconnecting the  agents \eqref{compact_systems_dynamic} using the protocol \eqref{containment_dynamicprotocol_compact},  we obtain the following  controlled error system
\begin{align}
\begin{bmatrix}\label{compact_containment_dynamic}
		\bm{\dot{\xi}_x} \\
		\bm{\dot{\xi}_w}
	\end{bmatrix} 
	&=
	\begin{bmatrix}
		I_{M} \otimes A & I_{M} \otimes BF \\
			-{L_1} \otimes GC_1 & I_{M} \otimes (A+GC_1) + L_1 \otimes B F
	\end{bmatrix}
	\begin{bmatrix}
		\bm{\xi_x}\\
		\bm{\xi_w}
	\end{bmatrix} \nonumber\\ &\qquad +
	\begin{bmatrix}
		L_1 \otimes E \\
		-{L_1}^2 \otimes GD_1
	\end{bmatrix} \bm{d}, \nonumber\\
	\bm{\epsilon} &= 
	\begin{bmatrix}
		I_{M} \otimes C_2 & I_{M}\otimes D_2 F	
	\end{bmatrix}
	\begin{bmatrix}
		\bm{\xi_x}\\
		\bm{\xi_w}
	\end{bmatrix}.
\end{align}
Denote ${A}_o  =	
		\begin{bmatrix}
			I_{M} \otimes A & I_{M} \otimes BF \\
			-{L_1} \otimes GC_1 & I_{M} \otimes (A+GC_1) + L_1 \otimes B F
		\end{bmatrix},
		\\
		{C}_o = 	\begin{bmatrix}
			I_{M} \otimes C_2 & I_{M} \otimes D_2 F	
		\end{bmatrix},  
		{E}_o   = 	
		\begin{bmatrix}
			L_1 \otimes E \\
			-{L_1}^2 \otimes GD_1
		\end{bmatrix}.$
The impulse response matrix of  the controlled error system \eqref{compact_containment_dynamic} from the external disturbance $\bm{d}$ to the performance output $\bm{\epsilon}$ equals
\begin{equation} \label{containnment_dynamic_impresponse}
	T_{F, G}(t) = {C}_o e^{{A}_ot}{E}_o.
\end{equation}
Subsequently, the associated $H_2$ cost functional  is defined as  
\begin{equation}\label{contaimnet_cost_FG}
	J(F, G) := \int_{0}^{\infty} \text{tr}\left[ T_{F, G}^{\top}(t) T_{F, G}(t)\right] dt,
\end{equation}
which measures the performance of the system 
 \eqref{compact_containment_dynamic} as the square of the $\mathcal{L}_2$-norm of its impulse response matrix. Since the communication among the agents is constrained, the  $H_2$ optimal state containment control problem is non-convex, and it is unknown whether a closed-form solution exists. As an alternative, we solve a suboptimality version of this problem.

\begin{defn}
	The dynamic  protocol \eqref{containment_dynamicprotocol} is said to achieve $H_2$ suboptimal state containment control  for the homogeneous multi-agent system \eqref{containment_leadersdynamic} and \eqref{containment_followersdynamic} if, 
	\begin{itemize}
		\item whenever the external disturbances of the followers are equal to zero, i.e., $\bm{d} = 0$, we have $\bm{x_f}(t) \rightarrow \boldsymbol{\omega_x}(t)$ and $\bm{w_f} (t) \rightarrow 0$ as $t \rightarrow \infty$.
		\item $J(F, G) < \gamma$, where $\gamma$ is an a priori given upper bound.
	\end{itemize}
\end{defn}

More specifically, the problem that we want to address is the following.
\begin{prob}\label{prob4}
	Let $\gamma > 0$. Design gain matrices $F$ and $G$ such that the associated distributed protocol \eqref{containment_dynamicprotocol} achieves state containment control and $J(F, G) < \gamma$.
\end{prob}

\subsection{Protocol Design}\label{subsec_design_homo}
In this subsection, we solve Problem \ref{prob4} and develop a method to compute suitable gain matrices $F$ and~$G$.

According to Assumption \ref{Assumation 1} and Lemma \ref{lemma8},   $L_1$ is positive definite. As a result, $L_1$ can be diagonalized by the orthogonal matrix $U \in \mathbb{R}^{{M} \times {M}}$, i.e., $U^{\top}L_1U = \Lambda = {\rm diag}(\lambda_1, \dots, \lambda_{M}),$
where $0 < \lambda_1 \leq \cdots \leq \lambda_M $ are the eigenvalues of  $L_1$. 

To simplify the problem, we now  show that Problem~\ref{prob4} can be equivalently recast into a number of $H_2$ suboptimal control problems for a set of independent systems. 
To this end, we introduce the following state transformation
\begin{equation} \label{state_transformation1}
	\begin{aligned}
		\begin{bmatrix}
			\bm{\hat{\xi}_x} \\
			\bm{\hat{\xi}_w} 
		\end{bmatrix} 
		&=
		\begin{bmatrix}
			U^{\top} \otimes I_n & 0 \\
			0 & U^{\top} \otimes I_n	
		\end{bmatrix}
		\begin{bmatrix}
			\bm{\xi_x} \\
			\bm{\xi_w}
		\end{bmatrix}.
	\end{aligned}
\end{equation}
Using the transformation \eqref{state_transformation1}, the controlled error system  \eqref{compact_containment_dynamic} becomes 
	\begin{align} \label{compact_containment-transform}
		\begin{bmatrix}
			\bm{\dot{\hat{\xi}}_x} \\
			\bm{\dot{\hat{\xi}}_w}
		\end{bmatrix} 
		&=
		\begin{bmatrix}
			I_{M} \otimes A & I_{M} \otimes BF \\
			-{\Lambda} \otimes GC_1 & I_{M} \otimes (A+G C_1) + \Lambda \otimes B F
		\end{bmatrix} 
		\begin{bmatrix}
		\bm{\hat{\xi}_x}\\
			\bm{\hat{\xi}_w} 
		\end{bmatrix} \nonumber \\&\qquad 
  +
		\begin{bmatrix}
			U^\top L_1 \otimes E \\
			-U^\top L_1^2 \otimes GD_1
		\end{bmatrix} {\bm{d}}
		,
		\\
		\bm{\epsilon}&= 
		\begin{bmatrix}
			U \otimes C_2 &  U \otimes D_2 F
		\end{bmatrix}
		\begin{bmatrix}
			\bm{\hat{\xi}_x}\\
			\bm{\hat{\xi}_w} 
		\end{bmatrix}.\nonumber
	\end{align}
 Denote $\hat{A}_o= 	
			\begin{bmatrix}
				I_{M} \otimes A & I_{M} \otimes BF \\
				-{\Lambda} \otimes GC_1 & I_{M} \otimes (A +GC_1) + \Lambda \otimes B F
			\end{bmatrix}, 
			\\
            \hat{E}_o= 	
			\begin{bmatrix}
				U^\top L_1 \otimes E \\
			-U^\top L_1^2 \otimes GD_1
			\end{bmatrix},
			\hat{C}_o= 
			\begin{bmatrix}
				U \otimes C_2 &  U \otimes D_2 F
			\end{bmatrix}.$
Obviously, the impulse response matrix of the system \eqref{compact_containment-transform} from the disturbance input $\bm{d}$ to the output $\bm{\epsilon}$ is equal to the impulse response matrix \eqref{containnment_dynamic_impresponse}.

In order to proceed, we introduce the following $M$ auxiliary linear systems
\begin{equation*}
	\begin{aligned} 
		\dot{\widetilde{\xi}}_i\left(t\right) & = A \widetilde{\xi}_i\left(t\right)  + {\lambda_i}B \widetilde{u}_i\left(t\right)  +  \frac{1}{\lambda_1} E \widetilde{d}_i\left(t\right),\\
		\widetilde{\zeta}_i\left(t\right) & = C_1 \widetilde{\xi}_i\left(t\right)  +  D_1 \widetilde{d}_i\left(t\right),\\
		\widetilde{\epsilon}_i\left(t\right) & = \lambda_i^2 C_2 \widetilde{\xi}_i\left(t\right)  + \lambda_i^2 D_2 \widetilde{u}_i\left(t\right),
	\end{aligned} i = 1,\dots, M,
\end{equation*}
where
$\lambda_i > 0, i = 1,\dots, M$ are the eigenvalues of the $L_1$, $\lambda_{1}$ is the smallest eigenvalue of $L_1$, and 
$\widetilde{\xi}_i \in \mathbb{R}^n$, $\widetilde{u}_i \in \mathbb{R}^m$, $\widetilde{d}_i \in \mathbb{R}^q$, $\widetilde{\zeta}_i  \in \mathbb{R}^r$, $\widetilde{\epsilon}_i \in \mathbb{R}^p$ are, respectively, the state, the coupling input, the external disturbance, the measured output, and the output to be controlled of the $i$th auxiliary system. By using the  dynamic feedback controllers
\begin{equation}\label{auxiliary_dynamicprotocol_homo}
	\begin{aligned}
		\dot{\widetilde{w}}_i & =A \widetilde{w}_i + \lambda_i  B  \widetilde{u}_i   +  G(C_1\widetilde{w}_i - \widetilde{\zeta}_i) ,\\
		\widetilde{u}_i & = F \widetilde{w}_i,\quad i = 1,\dots, M,
	\end{aligned}
\end{equation}
the controlled closed-loop auxiliary systems can be written as
	\begin{align} \label{decoupled_containment_dynamic_home}
		\begin{bmatrix}
			\dot{\widetilde{\xi}}_i \\
			\dot{\widetilde{w}}_i
		\end{bmatrix} 
		&=
		\begin{bmatrix}
			A &  \lambda_iBF \\
	- GC_1 & A + G C_1 + \lambda_i BF
		\end{bmatrix} 
		\begin{bmatrix}
			\widetilde{\xi}_i\\
			\widetilde{w}_i
		\end{bmatrix} \nonumber \\ & \qquad +
		\begin{bmatrix}
			\frac{1}{\lambda_1}E \\
	-G D_1
		\end{bmatrix} \widetilde{d}_i
		,
		\\
		\widetilde{\epsilon}_i & = 
		\begin{bmatrix}
			\lambda_i^2 C_2 &  \lambda_i^2 D_2 F
		\end{bmatrix}
		\begin{bmatrix}
			\widetilde{\xi}_i\\
			\widetilde{w}_i
		\end{bmatrix},\quad i = 1,\dots, M \nonumber.
	\end{align}
Denote $\widetilde{A}_i = \begin{bmatrix}
	A &  \lambda_iBF \\
	- GC_1 & A + G C_1 + \lambda_i BF 
\end{bmatrix} $, $\widetilde{E}_i = \begin{bmatrix}
	\frac{1}{\lambda_1}E \\
	-G D_1
\end{bmatrix}$, $ \widetilde{C}_i= \begin{bmatrix}
	\lambda_i^2 C_2 &  \lambda_i^2 D_2 F
\end{bmatrix}$, the impulse response matrix of the system \eqref{decoupled_containment_dynamic_home} from the disturbance $\widetilde{d}_i$ to the output $\widetilde{\epsilon}_i$ is 
$\widetilde{T}_{i, F, G}(t) = \widetilde{C}_i e^{\widetilde{A}_it}\widetilde{E}_i$ and the associated $H_2$ cost functional is given by		$J_i(F, G) := \int_{0}^{\infty} \text{tr}\left[ \widetilde{T}_{i, F, G}^{\top}(t) \widetilde{T}_{i, F, G}(t)\right] dt,$ $ i = 1,\dots, M.$

The following theorem  shows that $H_2$ suboptimal state containment control is achieved if $H_2$ suboptimal
control problems of $M$ auxiliary systems are solved.
\begin{thm}\label{thm7}
	Let $F \in \mathbb{R}^{m \times n}$, $G \in \mathbb{R}^{n \times r}$ and $\gamma > 0$ be given. The dynamic protocol \eqref{containment_dynamicprotocol} with   gains $F$ and $G$ achieves state containment control for the multi-agent system  \eqref{containment_leadersdynamic} and \eqref{containment_followersdynamic} if and only if the controllers  \eqref{auxiliary_dynamicprotocol_homo} with $F$ and $G$ internally stabilize  the  closed-loop systems \eqref{decoupled_containment_dynamic_home} for $i=1, \ldots, M$. Moreover, $J(F, G) < \gamma$ if $\sum_{i = 1}^{M} J_i(F, G) < \gamma$.
\end{thm}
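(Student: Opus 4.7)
The plan is to decouple the closed-loop error system \eqref{compact_containment_dynamic} into $M$ independent $2n$-dimensional systems indexed by the eigenvalues of $L_1$, relate each of them to the corresponding auxiliary closed-loop \eqref{decoupled_containment_dynamic_home}, and use this decoupling both to obtain the stability equivalence and to bound $J(F,G)$ by $\sum_i J_i(F,G)$.

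First I would invoke Lemma~\ref{lemma8} to write $U^\top L_1 U = \Lambda = \mathrm{diag}(\lambda_1,\dots,\lambda_M)$, apply the orthogonal change of coordinates \eqref{state_transformation1}, and then reorder the components as $(\hat{\xi}_{x,i},\hat{\xi}_{w,i})_{i=1}^M$ so that the resulting state matrix becomes block diagonal with $i$-th block $A_i' = \bigl[\begin{smallmatrix} A & BF \\ -\lambda_i GC_1 & A+GC_1+\lambda_i BF \end{smallmatrix}\bigr]$. Since the whole transformation is orthogonal, it preserves both internal stability and the $H_2$-norm from $\bm{d}$ to $\bm{\epsilon}$. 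For the stability part, the further invertible change $\hat{\xi}_{w,i} \mapsto \hat{\xi}_{w,i} - \lambda_i \hat{\xi}_{x,i}$ brings $A_i'$ into block upper-triangular form with diagonal blocks $A+\lambda_i BF$ and $A+GC_1$; the analogous observer-error substitution $\widetilde{w}_i \mapsto \widetilde{w}_i - \widetilde{\xi}_i$ does the same for $\widetilde{A}_i$, so $A_i'$ and $\widetilde{A}_i$ share the same spectrum for every $i$, and the if-and-only-if claim then follows from Definition~\ref{definition2}.

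For the $H_2$ cost I would compute $\hat{C}_o^\top \hat{C}_o$ in the reordered basis and use $U^\top U = I_M$ to see that it is block diagonal with identical blocks $C_i'^\top C_i'$ where $C_i' = [C_2,\ D_2 F]$. Hence the observability Gramian of the transformed controlled error system is itself block diagonal with blocks $\Pi_i$ solving $A_i'^\top \Pi_i + \Pi_i A_i' + C_i'^\top C_i' = 0$. Writing out the $i$-th reordered input block $E_i''$ and using the orthogonality identity $u_i^\top u_i = 1$ (with $u_i$ the $i$-th column of $U$), I would establish $E_i''(E_i'')^\top = S_i E_0 E_0^\top S_i$, with $S_i = \mathrm{diag}(\lambda_i I_n, \lambda_i^2 I_n)$ and $E_0 = [E^\top,\ -(GD_1)^\top]^\top$, so that
\[
J(F,G) = \sum_{i=1}^M \mathrm{tr}\!\left[\Pi_i\, S_i E_0 E_0^\top S_i\right].
\]
Then I would use the similarity $\widetilde{A}_i = S_i^{-1} A_i' S_i$ to transfer each summand into the coordinates of the auxiliary system and compare with the Lyapunov equation $\widetilde{A}_i^\top \widetilde{P}_i + \widetilde{P}_i \widetilde{A}_i + \widetilde{C}_i^\top \widetilde{C}_i = 0$ satisfied by the observability Gramian of the $i$-th auxiliary closed-loop, in order to deduce the termwise bound $\mathrm{tr}[\Pi_i S_i E_0 E_0^\top S_i] \leq J_i(F,G)$, and hence $J(F,G) \leq \sum_{i=1}^M J_i(F,G) < \gamma$.

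The hard part is that last termwise inequality: the difference $\widetilde{C}_i^\top \widetilde{C}_i - S_i C_i'^\top C_i' S_i$ is not positive semidefinite in general, so a direct Gramian-domination argument does not go through. Instead the proof must combine the upward rescaling $\lambda_i^2$ built into $\widetilde{C}_i$ with the downward rescaling $1/\lambda_1$ built into the first block of $\widetilde{E}_i$, and exploit the ordering $\lambda_i \geq \lambda_1$ to absorb the cross-terms between $E$ and $GD_1$, so that the net input--output rescaling matches precisely the scaling imposed on the block contribution by $S_i$.
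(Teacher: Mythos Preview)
Your stability argument is correct and essentially the paper's: both block-triangularize $A_i'$ and $\widetilde A_i$ by similarity and read off the common diagonal blocks $A+\lambda_i BF$ and $A+GC_1$, which gives the if-and-only-if immediately.

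The $H_2$ part, however, has a genuine gap. You correctly decouple $J(F,G)$ as $\sum_i \mathrm{tr}[\Pi_i\, S_i E_0 E_0^\top S_i]$ and correctly record the similarity $\widetilde A_i=S_i^{-1}A_i'S_i$, but your final paragraph is not a proof: you name the obstruction (the Gramian difference is not sign-definite) and then only describe in words that ``the net input--output rescaling matches'', without carrying it out. If you push the similarity through, the auxiliary Gramian $\widetilde P_i$ transports to the observability Gramian of the pair $\bigl(A_i',\,[\lambda_i C_2,\ D_2F]\bigr)$ paired with the input $\bigl[\tfrac{\lambda_i}{\lambda_1}E^\top,\ -\lambda_i^2(GD_1)^\top\bigr]^\top$, whereas the $i$-th error-system contribution uses output $[C_2,\ D_2F]$ and input $\bigl[\lambda_i E^\top,\ -\lambda_i^2(GD_1)^\top\bigr]^\top$. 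Neither the output-side nor the input-side comparison is a matrix inequality in general: the off-diagonal blocks $C_2^\top D_2F$ and $E(GD_1)^\top$ obstruct positive semidefiniteness, and Theorem~\ref{thm7} is stated \emph{without} the orthogonality hypotheses $D_2^\top C_2=0$, $D_1E^\top=0$ that would kill them. So the Gramian-domination route, as you set it up, cannot be completed under the stated assumptions.

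The paper does not go through Gramians at all. It stays at the impulse-response level and introduces an intermediate triple $\bar A_i=\widetilde A_i$, $\bar C_i=[C_2,\ D_2F]$, $\bar E_i=\bigl[\lambda_i E^\top,\ -\lambda_i^2(GD_1)^\top\bigr]^\top$, and argues in two steps: (i) the auxiliary impulse response $\widetilde C_i e^{\widetilde A_i t}\widetilde E_i$ differs from $\bar C_i e^{\bar A_i t}\bar E_i$ only by the scalar factor $\lambda_i/\lambda_1\ge 1$, so dropping that factor yields the termwise inequality pointwise in $t$; (ii) $\sum_i\|\bar C_i e^{\bar A_i t}\bar E_i\|_F^2$ is identified with $\|\hat C_o e^{\hat A_o t}\hat E_o\|_F^2$, hence with the integrand of $J(F,G)$. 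The whole point is that a \emph{scalar} factor sidesteps any matrix-ordering comparison. To repair your argument you would need to exhibit this (or an equivalent) scalar relation between the two impulse responses rather than attempt a Gramian inequality.
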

\begin{proof}
 We first show that containment control of multi-agent system  \eqref{containment_leadersdynamic} and \eqref{containment_followersdynamic} is achieved if and only if the controllers  \eqref{auxiliary_dynamicprotocol_homo} with $F$ and $ G$ internally stabilize  the  closed-loop systems \eqref{decoupled_containment_dynamic_home} for $i=1, \ldots, M$. Note that the state containment control is achieved if and only if the matrix  $\hat{A}_o$ of the controlled error system \eqref{compact_containment-transform} is Hurwitz. Note also that the matrix  $\hat{A}_o$ is Hurwitz if and only if matrices $\hat{A}_{oi}: = \begin{bmatrix}
		A & BF \\
				-\lambda_i GC_1 & A +GC_1 + \lambda_i B F
	\end{bmatrix} $ are Hurwitz. Now, by applying the state transformation 
	$\begin{bmatrix}
		\hat{\xi}_{ei} \\
		\hat{\xi}_{wi}
	\end{bmatrix} 
	=
	\begin{bmatrix}
		I_n & -\frac{1}{\lambda_i}\\
            0  & I_n
	\end{bmatrix}
	\begin{bmatrix}
		\hat{\xi}_{xi} \\
		\hat{\xi}_{wi}
	\end{bmatrix}$ on the controlled error system \eqref{compact_containment-transform}, the matrices $\hat{A}_{oi}$ can be transformed into $\hat{A}_{ei}: = \begin{bmatrix}
		A + GC_1 & 0 \\
				-\lambda_i BF & A + \lambda_i B F
	\end{bmatrix}$.  It is then easy to see that the controlled error system \eqref{compact_containment-transform} is internally stable, and subsequently, containment control of multi-agent system \eqref{containment_leadersdynamic} and \eqref{containment_followersdynamic} is achieved if and only if the matrices $A+\lambda_i BF$ and $A +G C_1$ for $i = 1,\dots, M$ are Hurwitz. 
 
	Now, by applying the state transformation 
	$\begin{bmatrix}
		\widetilde{w}_i \\
		\widetilde{e}_i
	\end{bmatrix} 
	=
	\begin{bmatrix}
		0  & I_n \\
		I_n & -I_n	
	\end{bmatrix}
	\begin{bmatrix}
		\widetilde{\xi}_i \\
		\widetilde{w}_i
	\end{bmatrix}$ on the auxiliary systems \eqref{decoupled_containment_dynamic_home},  the matrices $\widetilde{A}_i = \begin{bmatrix}
		A &  \lambda_iBF \\
	- GC_1 & A + G C_1 + \lambda_i BF 
	\end{bmatrix} $ can be transformed into $\widetilde{A}_{ei}: = \begin{bmatrix}
		A + \lambda_i BF&   -GC_1 \\
		0 & A + G C_1 
	\end{bmatrix}$. It is easy to see that all the $M$ closed-loop systems  \eqref{decoupled_containment_dynamic_home} are internally stable if and only if the matrices $A+\lambda_i BF$ and $A +G C_1$ are Hurwitz. It then follows from the above that state containment control for the multi-agent system \eqref{containment_leadersdynamic} and \eqref{containment_followersdynamic} is achieved if and only if the $M$ auxiliary closed-loop systems are internally stabilized by the controllers  \eqref{auxiliary_dynamicprotocol_homo}.
	
	Next, let $F$ and $G$ be such that matrices $A+\lambda_i BF$ and $A + G C_1$ are Hurwitz for $i = 1,\dots, M$. We show that $\sum_{i = 1}^{M} J_i(F, G) < \gamma$ implies $J(F, G) < \gamma$. Recall that $J_i(F, G) 
			= \int_{0}^{\infty} {\rm tr} (\widetilde{T}_{i, F, G}(t)^\top \widetilde{T}_{i, F, G}(t)) dt$ and $\widetilde{T}_{i, F, G}(t) = \widetilde{C}_i e^{\widetilde{A}_it}\widetilde{E}_i$, then it holds that
 \begin{equation}\label{J_costsum}
		\begin{aligned}
			\sum_{i = 1}^{M} &J_i(F, G) 
			 =  \int_{0}^{\infty} \sum_{i = 1}^{M}\text{tr}\left[ (\widetilde{C}_{i} e^{\widetilde{A}_{i} t}\widetilde{E}_{i})^\top(\widetilde{C}_{i} e^{\widetilde{A}_{i}t}\widetilde{E}_{i})\right] dt\\
            = &  \int_{0}^{\infty} \sum_{i = 1}^{M} \text{tr}\left[ (\bar{C}_{i} e^{\bar{A}_{i} t}\bar{E}_{i})^\top(\frac{\lambda_i}{\lambda_1})^\top(\frac{\lambda_i}{\lambda_1})(\bar{C}_{i} e^{\bar{A}_{i}t}\bar{E}_{i})\right]dt\\
            \geq & \int_{0}^{\infty} \sum_{i = 1}^{M} \text{tr}\left[ (\bar{C}_{i} e^{\bar{A}_{i} t}\bar{E}_{i})^\top(\bar{C}_{i} e^{\bar{A}_{i}t}\bar{E}_{i})\right]dt
		\end{aligned}
	\end{equation}
 where $\bar{C}_{i}: = \begin{bmatrix}
				C_2 & D_2 F
			\end{bmatrix} $, $\bar{E}_{i}: = \begin{bmatrix}
				\lambda_i E \\
			-\lambda_i^2  GD_1
			\end{bmatrix},$ and $\bar{A}_{i}: = \begin{bmatrix}
		A & \lambda_i BF \\
				- GC_1 & A +GC_1 + \lambda_i B F
	\end{bmatrix}$ for $i = 1,\dots, M$, and in the last step of \eqref{J_costsum} we have used $0< \lambda_1 \leq \lambda_i$  for  $i = 1,\dots, M$. In fact, it is not difficult to see that
  \begin{equation}\label{J_cost}
		\begin{aligned}
  & \int_{0}^{\infty} \sum_{i = 1}^{M} \text{tr}\left[ (\bar{C}_{i} e^{\bar{A}_{i} t}\bar{E}_{i})^\top(\bar{C}_{i} e^{\bar{A}_{i}t}\bar{E}_{i})\right]dt\\ 
=   &  \int_{0}^{\infty}\text{tr}\left[ (\hat{C}_o e^{\hat{A}_o t}\hat{E}_o)^\top(\hat{C}_o e^{\hat{A}_ot}\hat{E}_o)\right]dt =  J(F ,G).
		\end{aligned}
	\end{equation}
It then follows from \eqref{J_costsum} and \eqref{J_cost} that
 \begin{equation*}
		\begin{aligned}
			J(F, G)  
			\leq &  \sum_{i = 1}^{M} J_i(F, G) < \gamma.
		\end{aligned}
	\end{equation*}
	This completes the proof.
\end{proof}

By applying Theorem \ref{thm7}, the   $H_2$ suboptimal state containment control problem of multi-agent system \eqref{containment_leadersdynamic} and~\eqref{containment_followersdynamic} can be converted into a number of $H_2$ suboptimal control problems for the $M$ independent auxiliary systems \eqref{decoupled_containment_dynamic_home} by dynamic controllers~\eqref{auxiliary_dynamicprotocol_homo}.
Furthermore,   the  following lemma
provides conditions under which  the $M$ closed-loop systems \eqref{decoupled_containment_dynamic_home} are internally stable   while achieving $\sum_{i = 1}^{M} J_i(F, G) < \gamma$.
\begin{lem}\label{lemma9}
	Let $\gamma > 0$ be given. Assume that $D_1E^{\top}  =0$, $D_2^{\top} C_2 =0$, $D_1 D_1^{\top} =I_r$, and ${D}_2^\top {D}_2 =I_m$. The dynamic controllers \eqref{auxiliary_dynamicprotocol_homo} with gain matrices $F$ and $G=-QC_1^\top$ internally stabilize all $M$ closed-loop systems \eqref{decoupled_containment_dynamic_home} and $\sum_{i = 1}^{M} J_i(F, G) < \gamma$ if there exist $P_i > 0$, $i = 1,\dots, M,$  and $Q > 0$ satisfying
	\begin{align}
		& (A + \lambda_i B F )^{\top} P_i  + P_i (A + \lambda_i B F ) \nonumber\\ &\qquad\quad+ (\lambda_i^2 C_2+ \lambda_i^2  D_2 F )^\top (\lambda_i^2 C_2+ \lambda_i^2  D_2 F) < 0, \label{Lemma9_P} \\
		& A Q + Q A^\top - Q C_1^\top C_1 Q +\frac{1}{\lambda_1^2} E E^\top  <0, \label{Lemma9_Q} \\
		& \sum_{i=1}^{M}
		{\rm tr} \left(C_1 Q P_i QC_1^\top  \right) + \lambda_i^4 {\rm tr} \left(  C_2 Q  C_2^\top  \right)  < \gamma \label{Lemma9_gamma}.
	\end{align}		
\end{lem}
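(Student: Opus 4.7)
The plan is to apply Lemma~\ref{lem_1} separately to each of the $M$ auxiliary systems that appear in \eqref{decoupled_containment_dynamic_home}, and then sum the resulting cost bounds. First, I identify the system data of the $i$th auxiliary system with the symbols used in Lemma~\ref{lem_1}: take $\bar{A}=A$, $\bar{B}=\lambda_i B$, $\bar{E}=\tfrac{1}{\lambda_1}E$, $\bar{C}_1=C_1$, $\bar{D}_1=D_1$, $\bar{C}_2=\lambda_i^2 C_2$, and $\bar{D}_2=\lambda_i^2 D_2$. Under the standing hypotheses $D_1 E^\top=0$, $D_2^\top C_2=0$, $D_1 D_1^\top=I_r$, and $D_2^\top D_2=I_m$, the corresponding orthogonality and full-rank assumptions of Lemma~\ref{lem_1} hold for every $i$: $\bar{D}_1\bar{E}^\top=\tfrac{1}{\lambda_1}D_1 E^\top=0$, $\bar{D}_2^\top\bar{C}_2=\lambda_i^4 D_2^\top C_2=0$, $\bar{D}_1\bar{D}_1^\top=I_r>0$, and $\bar{D}_2^\top\bar{D}_2=\lambda_i^4 I_m>0$.

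Second, I verify that the two Riccati-type inequalities required by Lemma~\ref{lem_1} are exactly \eqref{Lemma9_P} and \eqref{Lemma9_Q}. The inequality for $P$ in Lemma~\ref{lem_1}, after substituting the data above, becomes \eqref{Lemma9_P} with the same $P_i$. The inequality for $Q$ becomes $AQ+QA^\top-QC_1^\top(D_1D_1^\top)^{-1}C_1Q+\tfrac{1}{\lambda_1^2}EE^\top<0$, which is \eqref{Lemma9_Q}. Crucially, because $\bar{E}=\tfrac{1}{\lambda_1}E$ does not depend on $i$ and $D_1D_1^\top=I_r$, the Riccati inequality in $Q$ is identical for all $i$, so a single positive definite $Q$ serves all $M$ auxiliary systems simultaneously. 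The prescribed observer gain in Lemma~\ref{lem_1} is $G=-Q\bar{C}_1^\top(\bar{D}_1\bar{D}_1^\top)^{-1}=-QC_1^\top$, which matches the gain stated in the lemma. Therefore, Lemma~\ref{lem_1} yields, for each $i$, internal stability of the $i$th closed-loop auxiliary system \eqref{decoupled_containment_dynamic_home} together with the bound
\begin{equation*}
J_i(F,G)<\operatorname{tr}\bigl(C_1 Q P_i Q C_1^\top\bigr)+\lambda_i^4\operatorname{tr}\bigl(C_2 Q C_2^\top\bigr).
\end{equation*}

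Third, summing these $M$ individual bounds and invoking hypothesis \eqref{Lemma9_gamma} gives
\begin{equation*}
\sum_{i=1}^{M} J_i(F,G)<\sum_{i=1}^{M}\Bigl[\operatorname{tr}\bigl(C_1 Q P_i Q C_1^\top\bigr)+\lambda_i^4\operatorname{tr}\bigl(C_2 Q C_2^\top\bigr)\Bigr]<\gamma,
\end{equation*}
which is the desired cost inequality and completes the proof. The main point requiring care, rather than a real obstacle, is the observation that a single $Q$ suffices for all $i$: this works precisely because the disturbance gain in the auxiliary systems was scaled by $\tfrac{1}{\lambda_1}$ using the smallest eigenvalue, so the observer Riccati inequality for the ``worst'' system dominates all the others and the same observer gain $G=-QC_1^\top$ can be shared across all $M$ decoupled designs. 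Everything else is a direct transcription of Lemma~\ref{lem_1} into the parameterized family indexed by the eigenvalues of $L_1$.
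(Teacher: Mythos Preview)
Your proof is correct and follows the same route as the paper: apply Lemma~\ref{lem_1} to each auxiliary system with $\bar{A}=A$, $\bar{B}=\lambda_i B$, $\bar{E}=\tfrac{1}{\lambda_1}E$, $\bar{C}_1=C_1$, $\bar{D}_1=D_1$, $\bar{C}_2=\lambda_i^2 C_2$, $\bar{D}_2=\lambda_i^2 D_2$, then sum the individual bounds. The only nuance is that Lemma~\ref{lem_1} as stated yields $J_i<\gamma_i$ for any tolerance $\gamma_i$ strictly exceeding the trace quantity, so strictly speaking it gives only $J_i\le\operatorname{tr}(C_1QP_iQC_1^\top)+\lambda_i^4\operatorname{tr}(C_2QC_2^\top)$; the paper handles this by introducing slack variables $\epsilon_i>0$ with $\gamma_i:=\operatorname{tr}(\cdot)+\epsilon_i$, but your final conclusion $\sum_i J_i<\gamma$ is unaffected since the strict inequality is supplied by \eqref{Lemma9_gamma}.
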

\begin{proof} By \eqref{Lemma9_gamma}, for $\epsilon_i >0$  sufficiently small, we have $\sum_{i=1}^{M} \gamma_i <\gamma$, where $\gamma_i :=
		{\rm tr} \left(C_1 Q P_i QC_1^\top  \right) + \lambda_i^4 {\rm tr} \left(  C_2 Q  C_2^\top  \right)  + \epsilon_i$.  By taking $\bar{A} = A$, $\bar{B} = \lambda_i B$, $\bar{C}_1 =  C_1$, $\bar{D}_1 = D_1$, $\bar{C}_2 =  \lambda_i^2 C_2$, $\bar{D}_2 = \lambda_i^2 D_2$, and $\bar{E} =  \frac{1}{\lambda_1} E$ in Lemma \ref{lem_1}, there exist $P_i > 0$ and $Q  > 0$ such that \eqref{Lemma9_P}, \eqref{Lemma9_Q} and ${\rm tr} \left(C_1 Q P_i QC_1^\top  \right) + \lambda_i^4 {\rm tr} \left(  C_2 Q  C_2^\top  \right) < \gamma_i$ hold for all $i = 1,\dots, M$. Additionally, all closed-loop systems \eqref{decoupled_containment_dynamic_home} are internally stable and $J_i(F, G) < \gamma_i$ by dynamic controllers \eqref{auxiliary_dynamicprotocol_homo} with gains matrices $F$ and $G=-QC_1^\top$. Subsequently, $\sum_{i=1}^{M} J_i(F, G) < \sum_{i=1}^{M} \gamma_i < \gamma $.
\end{proof}

Note that the assumptions $D_1 D_1^{\top} =I_r$ and ${D}_2^\top {D}_2 =I_m$  in Lemma \ref{lemma9} are made  to simplify notation and can be easily relaxed to the regularity conditions $D_1 D_1^{\top} > 0$ and $D_2^{\top}D_2 > 0$ as in Lemma \ref{lem_1}. Nevertheless, Lemma \ref{lemma9} does not yet provide any method to compute a suitable gain matrix $F$. Therefore, combined with the $G$ given above, a design method for obtaining an appropriate distributed dynamic protocol \eqref{containment_dynamicprotocol}  is established below.
\begin{thm}\label{thm8}
	Let $\gamma > 0$ be given. Assume that $D_1E^{\top}  =0$, $D_2^{\top} C_2 =0$, $D_1 D_1^{\top} =I_r$, and ${D}_2^\top {D}_2 =I_m$. 
	We consider two cases for the parameter $c_p$:
	\begin{enumerate}
		\item \label{thm8_pcase1}
        if $0 < {c_p} < \frac{2}{(\lambda_1+\lambda_M)(\lambda_1^2+\lambda_{M}^2)}$,
        where $\lambda_1$ is the smallest eigenvalue  and $\lambda_{M}$ is the largest eigenvalue of $L_1$.
		Then there exists $P>0$ satisfying
		\begin{equation}\label{thm8_c2_p}
			A^{\top} P + P A + ({c_p}^2 \lambda_{1}^4 - 2 {c_p} \lambda_1) PB B^{\top} P +\lambda_M^4 {C_2}^{\top} {C_2} <0.
		\end{equation}
		\item \label{thm8_pcase2}
	   if $\frac{2}{(\lambda_1+\lambda_M)(\lambda_1^2+\lambda_{M}^2)} \leq {c_p} <\frac{2}{\lambda_{M}^3},$
		then there exists $P>0$ satisfying
		\begin{equation}\label{thm8_c1_p}
			A^{\top} P + P A + ({c_p}^2 \lambda_{M}^4 - 2 {c_p} \lambda_M) PB B^{\top} P +\lambda_M^4 {C_2}^{\top} {C_2} <0.
		\end{equation}
	\end{enumerate}
 
	Let $Q>0$ satisfy the following Riccati inequality
		\begin{equation}\label{thm8_c1_q}
			A Q + Q A^{\top} - QC_1^\top C_1Q + \frac{1}{\lambda_1^2}EE^{\top} <0.
		\end{equation}
If, in addition, $P$ and $Q$ also satisfy
	\begin{equation}\label{thm8_gamma}
		{\rm tr}\left(C_1 Q P QC_1^\top  \right) + \lambda_M^4  {\rm tr} \left(  C_2 Q  C_2^\top  \right) <  \frac{\gamma}{M},
	\end{equation}
	then the protocol \eqref{containment_dynamicprotocol} with $F := -{c_p}B^\top P$ and $G := -QC_1^\top$ achieves state containment control for multi-agent system \eqref{containment_leadersdynamic} and \eqref{containment_followersdynamic}, and the protocol is suboptimal, i.e., $J(F, G) < \gamma$.
	
	\begin{proof}
            First, note that \eqref{thm8_c1_q} is exactly \eqref{Lemma9_Q}.
		For case \ref{thm8_pcase2}) above, it follows from $\frac{2}{(\lambda_1+\lambda_M)(\lambda_1^2+\lambda_{M}^2)} \leq {c_p} <\frac{2}{\lambda_{M}^3}$ that 
		${c_p}^2 \lambda_{M}^4 - 2 {c_p} \lambda_{M} <0$ and that the Riccati inequality \eqref{thm8_c1_p} has a positive definite solution $P$. Since ${c_p}^2 \lambda_1^4 - 2 {c_p} \lambda_1 \leq {c_p}^2 \lambda_i^4 - 2 {c_p} \lambda_i \leq {c_p}^2 \lambda_{M}^4 - 2 {c_p} \lambda_{M} <0$ and $\lambda_i \leq \lambda_{M}$ for  $i = 1,\dots, M$, the positive definite solution $P$ of \eqref{thm8_c1_p} also satisfies the  Riccati inequalities \begin{equation}\label{thm8_c1_pf3}
			A^{\top} P + P A + ({c_p}^2 \lambda_i^4 - 2 {c_p} \lambda_i) PB B^{\top} P +\lambda_i^4 {C_2}^{\top} {C_2} <0
		\end{equation}
  for $i=1,\ldots,M$. 
		Recall the conditions ${D_2}^{\top} {C_2} =0$ and ${D_2}^\top {D_2} =I_m$, then \eqref{thm8_c1_pf3} immediately yields 
		\begin{equation}\label{thm8_c1_pf1}
			\begin{aligned}
				&(A - {c_p}\lambda_i B B^{\top}P)^{\top}P + P(A- {c_p}\lambda_iB B^{\top}P) +\\
				&\quad (\lambda_i^2C_2 - {c_p}\lambda_i^2 {D_2} B^\top P)^\top  (\lambda_i^2 C_2 - {c_p}\lambda_i^2{D_2} B^\top P) < 0.
			\end{aligned}
		\end{equation}	
By taking $P_i = P$ and $F=-c_pB^\top P$, it follows that 
  \eqref{Lemma9_P} holds.
		Furthermore,  \eqref{thm8_gamma} implies that also \eqref{Lemma9_gamma}  holds.
		Then it follows    from Lemma~\ref{lemma9} that all $M$ closed-loop systems \eqref{decoupled_containment_dynamic_home} are internally stable and $\sum_{i=1}^{M} J_i(F, G)<\gamma$. 
    
        Subsequently, it follows from Theorem \ref{thm7} that the dynamic feedback protocol~\eqref{containment_dynamicprotocol} achieves containment control for the multi-agent system \eqref{containment_leadersdynamic}, \eqref{containment_followersdynamic} and $J(F, G)<\gamma$.
        
         Similarly, for case \ref{thm8_pcase1}), it can be verified that  ${c_p}^2 \lambda_{M}^4 - 2 {c_p} \lambda_{M} \leq {c_p}^2 \lambda_i^4 - 2 {c_p} \lambda_i \leq {c_p}^2 \lambda_1^4 - 2 {c_p} \lambda_1 <0$ and $\lambda_i \leq \lambda_{M}$ for  $i = 1,\dots, M$. The detailed proof is similar to case \ref{thm8_pcase2}) and thus is omitted here. 
	\end{proof}
\end{thm}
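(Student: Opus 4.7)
The plan is to verify the hypotheses of Lemma~\ref{lemma9} with the choice $P_i = P$ for every $i = 1, \ldots, M$, $F = -c_p B^\top P$, and $G = -Q C_1^\top$, and then invoke Theorem~\ref{thm7} to transfer internal stability and the $H_2$ bound from the $M$ decoupled auxiliary systems to the full error system. The Riccati inequality \eqref{Lemma9_Q} is literally \eqref{thm8_c1_q}, so the real work lies in establishing \eqref{Lemma9_P} for each $i$ and \eqref{Lemma9_gamma}.

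First I would substitute $F = -c_p B^\top P$ into \eqref{Lemma9_P} and use $D_2^\top C_2 = 0$ together with $D_2^\top D_2 = I_m$ to collapse the cross terms, so that the inequality reduces to
\begin{equation*}
    A^\top P + P A + (c_p^2 \lambda_i^4 - 2 c_p \lambda_i) P B B^\top P + \lambda_i^4 C_2^\top C_2 < 0
\end{equation*}
for each $i = 1, \ldots, M$. The goal is then to show this whole family is uniformly implied by the single Riccati inequality supplied by the relevant case of the hypothesis.

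The crux is the structural observation that $f(\lambda) := c_p^2 \lambda^4 - 2 c_p \lambda$ is convex on $[\lambda_1, \lambda_M]$, so its maximum on that interval is attained at an endpoint. A direct factorization gives $f(\lambda_M) - f(\lambda_1) = c_p (\lambda_M - \lambda_1)\bigl[c_p(\lambda_1 + \lambda_M)(\lambda_1^2 + \lambda_M^2) - 2\bigr]$, which identifies the threshold $c_p^\ast := \frac{2}{(\lambda_1 + \lambda_M)(\lambda_1^2 + \lambda_M^2)}$ as exactly the value where $f(\lambda_1) = f(\lambda_M)$. In case~\ref{thm8_pcase1} the maximum is attained at $\lambda_1$, and $c_p < c_p^\ast \leq 2/\lambda_1^3$ secures $f(\lambda_1) < 0$; in case~\ref{thm8_pcase2} the maximum is attained at $\lambda_M$, and $c_p < 2/\lambda_M^3$ secures $f(\lambda_M) < 0$. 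Since $P B B^\top P \geq 0$ and $\lambda_i^4 \leq \lambda_M^4$, each per-$i$ Riccati is then majorized term by term by \eqref{thm8_c2_p} in case~\ref{thm8_pcase1} and by \eqref{thm8_c1_p} in case~\ref{thm8_pcase2}, so the common $P$ does serve as every $P_i$.

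Finally, for \eqref{Lemma9_gamma}, substituting $P_i = P$ and using $\lambda_i^4 \leq \lambda_M^4$ yields
\begin{equation*}
    \sum_{i=1}^{M}\bigl[{\rm tr}(C_1 Q P Q C_1^\top) + \lambda_i^4 {\rm tr}(C_2 Q C_2^\top)\bigr] \leq M \bigl[{\rm tr}(C_1 Q P Q C_1^\top) + \lambda_M^4 {\rm tr}(C_2 Q C_2^\top)\bigr] < \gamma
\end{equation*}
by \eqref{thm8_gamma}. Lemma~\ref{lemma9} then delivers internal stability of all $M$ closed-loop systems \eqref{decoupled_containment_dynamic_home} together with $\sum_i J_i(F, G) < \gamma$, whence Theorem~\ref{thm7} concludes state containment control and $J(F, G) < \gamma$. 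The main obstacle I anticipate is precisely the need to use a common $P$ uniformly across all $\lambda_i$; the two-case split on $c_p$ is the device that controls the sign and the ordering of $f$ on the spectrum of $L_1$ regardless of whether the unconstrained minimum of $f$ lies inside or outside $[\lambda_1, \lambda_M]$.
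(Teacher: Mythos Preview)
Your proposal is correct and follows essentially the same approach as the paper: reduce to Lemma~\ref{lemma9} by taking $P_i=P$ and $F=-c_pB^\top P$, use the structural assumptions on $D_2$ to collapse \eqref{Lemma9_P} to the per-$i$ Riccati inequalities, then bound each of these by the single endpoint Riccati \eqref{thm8_c2_p} or \eqref{thm8_c1_p} according to the two-case split on $c_p$, and finally invoke Theorem~\ref{thm7}. Your convexity-plus-factorization argument for $f(\lambda)=c_p^2\lambda^4-2c_p\lambda$ is in fact a cleaner justification of the key ordering than the paper's direct inequality chain, but the overall strategy is identical.
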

\begin{rem}\label{remark4}
	Theorem \ref{thm8} states that by choosing feasible $c_p$, $P$ and $Q$, the distributed protocol~\eqref{containment_dynamicprotocol} with gains $F = -{c_p}B^\top P$ and $G = -Q{C_1}^\top$ is suboptimal. Then the question arises: how do we find the smallest upper bound $\gamma$ such that $ {\rm tr}\left(C_1 Q P QC_1^\top  \right) + \lambda_M^4  {\rm tr} \left(  C_2 Q  C_2^\top  \right) <  \frac{\gamma}{M}$?
	
	It is easy to note that in general the smaller $P$ and $Q$ lead to smaller ${\rm tr} \left(C_1 Q P QC_1^\top  \right)$ and ${\rm tr} \left(  C_2 Q  C_2^\top  \right)$, consequently, smaller $\gamma$. To find the feasible selection of $\gamma$, we could try to find $P$ and $Q$ as small as possible. Indeed, we can compute $P$ satisfying \eqref{thm8_c2_p} or \eqref{thm8_c1_p} by finding $P({c_p},\delta) > 0$ satisfying
	\begin{equation*}\label{c1p}
		A^{\top} P + P A - r_{p1} PB B^{\top} P +\lambda_M^4 {C_2}^{\top} C_2  + \delta I_n = 0, 
	\end{equation*}
	\begin{equation*}\label{c2p}
		A^{\top} P + P A - r_{p2} PB B^{\top} P +\lambda_M^4 {C_2}^{\top} C_2 + \delta I_n = 0,
	\end{equation*}
	where $\delta > 0 $,  $ r_{p1} = (-{c_p}^2 \lambda_1^4 + 2 {c_p} \lambda_1)$ and $r_{p2} = (-{c_p}^2 \lambda_{M}^4 + 2 {c_p} \lambda_{M})$. Obviously, the larger the coefficient $r_{p1}$ (or $r_{p2})$ and the smaller $\delta$ lead to the smaller $P$.
	It can then be shown that, for both cases, a small  $P$ can be computed by taking ${c_p}^* = \frac{2}{(\lambda_1+\lambda_M)(\lambda_1^2+\lambda_{M}^2)}$, which leads to the biggest   $r_{p1}$ and $r_{p2}$, and  $\delta > 0$ very close to $0$.
	Similarly, a small $Q > 0$ satisfying \eqref{thm8_c1_q}  can be computed by solving $Q(\eta) > 0 $ with $\eta > 0 $ that satisfies
	\begin{equation*}\label{c1q}
		A Q + Q A^{\top} - QC_1^\top C_1Q + \frac{1}{\lambda_1^2}EE^{\top}+\eta I_n = 0.
	\end{equation*}	
    Therefore, if we choose $\eta > 0$ and $\delta > 0$ very close to 0, and choose ${c_p} = \frac{2}{(\lambda_1+\lambda_M)(\lambda_1^2+\lambda_{M}^2)}$, we find the `best' small $\gamma$ in the sense as explained above.
\end{rem}

\section{\texorpdfstring{${H}_2$}~ suboptimal output containment control of heterogeneous linear multi-agent systems}
In this section, we consider the ${H}_2$ suboptimal output containment control problem for heterogeneous multi-agent systems. In Subsection \ref{subsec_problem_heter}, we formulate this problem. Then we propose a design method in Subsection \ref{subsec_design_heter} to develop a  distributed   protocol by dynamic output feedback that solves this problem.

\label{sec_heterogenous}

\subsection{Problem Formulation}\label{subsec_problem_heter}
We consider a heterogeneous linear multi-agent system with  $N$ agents that consists of $M$   followers subjected to external disturbances and $N-M$ autonomous leaders. 
Without loss of generality, we assume that the agents labeled $1$ through $M$ are the followers, which we refer to as the follower set $\mathcal{F} \overset{\Delta}{=} \left\{1, \ldots, M\right\}$, while the agents labeled $M+1$ through $N$ are the leaders, referred to as the leader set $\mathcal{L} \overset{\Delta}{=} \left\{M + 1,\dots, N\right\}$.
The dynamics of the $i$th leader is represented by 
\begin{equation}\label{hleaders_dynamic}
		\dot{x}_i  = S x_i,\quad z_i = R x_i,\qquad i \in  \mathcal{L},
\end{equation}
where $x_i \in \mathbb{R}^r$ and $z_i \in \mathbb{R}^p$ are, respectively, the state and the output to be controlled. The matrices $S$ and $R$ are of compatible dimensions.
The dynamics of the $i$th follower is represented by
\begin{equation}\label{hfollowers_dynamic}
	\begin{aligned} 
		\dot{x}_i & = A_i x_i  + B_i u_i  + E_i d_i,\\
            y_i & = C_{1i} x_i  + D_{1i} d_i,\\
		z_i & = C_{2i} x_i  + D_{2i} u_i,
	\end{aligned}\qquad i \in \mathcal{F},
\end{equation}  
where $x_i \in \mathbb{R}^{n_i}$, $u_i \in \mathbb{R}^{m_i}$, $d_i \in \mathbb{R}^{q_i}$, $y_i \in \mathbb{R}^{r_i}$, and $z_i \in \mathbb{R}^p$ are, respectively, the state, the coupling input, the unknown external disturbance, the measured output, and the output to be controlled. 
The matrices $A_i$, $B_i$, $C_{1i}$, $C_{2i}$, $D_{1i}$, $D_{2i}$, and $E_i$ are of compatible dimensions. 
Throughout this section, we assume that the pairs $\left(A_i, B_i\right)$ are stabilizable and the pairs $\left(C_{1i}, A_i\right)$ are detectable. The followers are heterogeneous agents in the sense that the system dynamics and state dimensions among them are generally different. In this regard, the control objective, therefore, is to ensure that the outputs of the followers converge to the  convex hull spanned by the outputs of the leaders  \cite{zuo2017output,qin2018output}.

\begin{rem}
    Note that the multi-agent system considered in this paper is more general than that considered in~\cite{qin2018output}. Indeed,  the  followers \eqref{hfollowers_dynamic}   contain measured outputs $y_i$ and controlled outputs $z_i$ with a feedthrough term, whereas the followers in \cite{qin2018output} contain only controlled outputs without feedthrough terms. 
\end{rem}

It has been shown in \cite{qin2018output} that  the solvability of certain regulator equations derived from the internal model principle is necessary for the output containment control of heterogeneous linear multi-agent systems. Accordingly, throughout this section,  we assume that there exists a positive integer $r$ such that the regulator equations
\begin{equation}
\begin{aligned}\label{regulation}
  &A_i \Pi_i  + B_i \Gamma_i  = \Pi_i S,\\
  &C_{2i} \Pi_i  + D_{2i} \Gamma_i  = R,  \qquad i \in \mathcal{F},
\end{aligned}
\end{equation}
have solutions $\Pi_i \in \mathbb{R}^{n_i\times r}$, $\Gamma_i \in \mathbb{R}^{m_i\times r}$, where we assume that the pair $\left(R, S\right)$ is observable and the eigenvalues of $S$ lie on the imaginary axis.

Following \cite{qin2018output}, we consider that the leaders \eqref{hleaders_dynamic} and the followers \eqref{hfollowers_dynamic} are interconnected by a distributed observed-based   dynamic output feedback protocol of the form 
\begin{equation}\label{heter_dynamicprotocol}
	\begin{aligned}
            \dot{w}_i & = A w_i + B u_i + G_i\left(C_{1i}w_i - y_i \right),\\
            \dot{v}_i & = S v_i + \sum_{j=1}^M a_{ij} (v_j - v_i) + \sum_{j={M+1}}^N a_{ij} (x_j - v_i),\\
		u_i & = F_i(w_i - \Pi_i v_i) + \Gamma_i v_i, \qquad i \in \mathcal{F},
	\end{aligned}
\end{equation} 
where $G_i \in \mathbb{R}^{n_i \times r_i}$ and  $F_i \in \mathbb{R}^{m_i \times n_i}$ are local feedback gains to be designed, and $a_{ij}$ is the $ij$th entry of the adjacency matrix $\mathcal{A}$ associated with the communication graph $\mathcal{G}$ which satisfies Assumption \ref{Assumation 1}. Note that in protocol \eqref{heter_dynamicprotocol}, the first equation with   state $w_i \in \mathbb{R}^{n_i}$ is a Luenberger observer to estimate   state $x_i$, the second equation is an auxiliary reference system with the state $v_i \in \mathbb{R}^{r}$, and the third equation is an output regulation controller.

Denote $\bm{x_f} = \left[x_1^{\top}, \ldots, x_M^{\top}\right]^{\top}$,  $\bm{x_l} =\left[x_{M+1}^{\top}, \ldots, x_N^{\top}\right]^{\top}$, \\$\bm{y_f}= \left[y_1^{\top}, \ldots, y_M^{\top}\right]^{\top}$, $\bm{u} =\left[u_1^{\top}, \ldots, u_M^{\top}\right]^{\top}$, $\bm{d} =[d_1^{\top}, \ldots, $\\$d_M^{\top}]^{\top}$, $\bm{w_f} = \left[w_1^{\top}, \ldots, w_M^{\top}\right]^{\top}$, $\bm{v_f} = \left[v_1^{\top}, \ldots, v_M^{\top}\right]^{\top}$, $\bm{z_f} =\\ \left[z_1^{\top}, \ldots, z_M^{\top}\right]^{\top}$, and $\bm{z_l} =\left[z_{M+1}^{\top}, \ldots, z_N^{\top}\right]^{\top}$.  Denote $A = {\rm blockdiag}(A_1, \ldots, A_M)$ and likewise define $B, C_1, C_2, D_1,$\\$ D_2$, and $E$.
We can then write the agents \eqref{hleaders_dynamic} and~\eqref{hfollowers_dynamic} in compact form as
\begin{equation}\label{hcompact_systems}
    \begin{aligned} 
            \dot{\bm{x_l}} &= (I_{N-M} \otimes S) \bm{x_l},\\
        \bm{z_l} &= (I_{N-M} \otimes R)\bm{x_l}, \\
       \bm{\dot{{x}}_f} &= A \bm{x_f} + B \bm{u} + E \bm{d},\\
        \bm{y_f} &= C_1\bm{x_f} + D_1 \bm{d},\\
        \bm{z_f} &= C_2\bm{x_f} + D_2 \bm{u}.
    \end{aligned}
\end{equation} 
Similarly, denote $F = {\rm blockdiag}(F_1, \ldots, F_M)$, and likewise define $G, \Pi$, and $\Gamma$. The protocol \eqref{heter_dynamicprotocol} can be written as 
\begin{equation}\label{heter_dynamicprotocol_compact}
\begin{aligned}
	\bm{\dot{w}_f} & = A \bm{w_f} + B \bm{u} +  G(C_1 \bm{w_f} - \bm{y_f}),\\
 	\bm{\dot{v}_f} & =(I_{M} \otimes S) \bm{v_f} - ({L_{1}}\otimes I_{r})\bm{v_f} - ({L_{2}} \otimes I_r)\bm{x_l},\\
        \bm{u} & = F\bm{w_f} + (\Gamma - F \Pi) \bm{v_f},
\end{aligned}
\end{equation}
where ${L}_1 \in \mathbb{R}^{M\times M}$ and ${L}_2 \in \mathbb{R}^{M\times\left(N-M\right)}$ are defined in \eqref{laplacian}.
Meanwhile, the regulation equations \eqref{regulation} can be written as 
\begin{equation}
\begin{aligned}\label{regulation_compact}
  A \Pi + B \Gamma  = \Pi & (I_M \otimes S),\\
  C_2 \Pi  + D_2 \Gamma  = & I_M \otimes R.
\end{aligned}
\end{equation}
Foremost, we want the protocol \eqref{heter_dynamicprotocol_compact} to achieve the output containment control problem for multi-agent system~\eqref{hcompact_systems} without considering disturbances. 

Here, output containment control means that the outputs of the followers converge into the convex hull  $\boldsymbol{\omega_z}(t)$ formed by the outputs of the leaders, i.e., 
\begin{equation}\label{outputconvex_hull}
\boldsymbol{\omega_z}(t) \overset{\Delta}{=}\left(-{L}_1^{-1} {L}_2 \otimes I_p\right)\bm{z_f}(t)
\end{equation}
where $\bm{z_f}(t)$ is the controlled output of the leaders and recall that the sum of each row  of $-{L}_1^{-1} {L}_2$ equal to $1$. We now formally define output containment control in the following.
\begin{defn}\label{definition4}
   The protocol \eqref{heter_dynamicprotocol_compact} is said to achieve output containment control for the multi-agent system~\eqref{hcompact_systems} if the outputs of the followers $\bm{z_f}$ converge into the convex hull $\boldsymbol{\omega_z}$ formed by the outputs of the leaders $\bm{z_l}$, the protocol state $\bm{w_f}$ converges to the followers' state $\bm{x_f}$, and the state $\bm{v_f}$ of the auxiliary reference system converges into the convex hull formed $\boldsymbol{\omega_v}$ by the state $\bm{x_l}$, i.e., $\bm{z_f}(t) \rightarrow \boldsymbol{\omega_z}(t)$, $\bm{w_f}(t) \rightarrow \bm{x_f}(t)$, and $\bm{v_f}(t) \rightarrow \bm{\omega_v}(t)$ where $\boldsymbol{\omega_v}(t) \overset{\Delta}{=}\left(-{L}_1^{-1} {L}_2 \otimes I_r\right)\bm{x_l}(t)$, as $t \rightarrow \infty$.
\end{defn}

To proceed, we introduce a new state error variable for each auxiliary reference system in protocol~\eqref{heter_dynamicprotocol} as
   $ \xi_{i} = \sum_{j=1}^M a_{ij} (v_i - v_j) + \sum_{j={M+1}}^N a_{ij} (v_i - x_j ), i \in \mathcal{F}.$
Denote $\bm{\xi_{i}} =\left[\xi_{1}^{\top}, \ldots, \xi_{M}^{\top}\right]^{\top}$,
 we then have
\begin{equation*}
	\bm{\xi} = (L_1 \otimes I_r) \bm{v_f} + (L_2 \otimes I_r) \bm{x_l}. 
\end{equation*}
Note that,  whenever $\bm{\xi}$ converges to $0$, we have   $\bm{v_f}$ tends to $ \left(-{L}_1^{-1} {L}_2 \otimes I_r\right)\bm{x_l}$, i.e., $\bm{v_f}(t) \rightarrow \bm{\omega_v}(t)$. Then state containment control of the auxiliary reference system is achieved.

We also introduce two new state error variables for each follower as $e_{i} = x_i - w_j, \delta_i = x_i - \Pi_i v_i,$ for $i \in \mathcal{F}$.
Denote $\bm{e} =\left[{{e_1}} ^{\top}, \ldots, {{e_M}}^{\top}\right]^{\top}$, $\bm{\delta} =\left[\delta_1^{\top}, \ldots, \delta_M^{\top}\right]^{\top}$, we then have
\begin{equation*}
        \bm{e} = \bm{x_f} - \bm{w_f}, ~~
        \bm{\delta} = \bm{x_f} - \Pi \bm{v_f}.
\end{equation*}
Whenever $\bm{e}$ and $\bm{\delta}$ converge to $0$, $\bm{w_f}$ converges to $\bm{x_f}$ and $\bm{x_f}$ converges to $\Pi \bm{v_f}$.

In the context of distributed control of multi-agent systems, we focus on the differences between the output values of the leaders and the followers. Thus, we introduce the performance output variable $\epsilon_i = \sum_{j=1}^N a_{ij} (z_i - z_j), i \in \mathcal{F}$.
Denote $\bm{\epsilon} =\left[\epsilon_1^{\top}, \ldots, \epsilon_M^{\top}\right]^{\top}$,   we have
\begin{equation*}
	\bm{\epsilon} = (L_1 \otimes I_p) \boldsymbol{z_f} + (L_2 \otimes I_p) \boldsymbol{z_l}.
\end{equation*} 
Therefore, the performance output variable $\bm{\epsilon}$ reflects the disagreements between the outputs of the leaders and the followers. Additional, it is worth noting that whenever $\bm{\epsilon}$ converges to $0$, $\bm{z_f}$ tends to $ \left(-{L}_1^{-1} {L}_2 \otimes I_p\right)\bm{z_l}$,  which is exactly the convex hull $\boldsymbol{\omega_z}(t)$ formed by the outputs of leaders $\bm{z_l}$, i.e.,  output containment control for the multi-agent system~\eqref{hcompact_systems} is achieved.

By using the new state variables $\bm{\xi}, \bm{e}, \bm{\delta}$ and the performance output variable~$\bm{\epsilon}$, the dynamics of the error system can be written in compact form as 
\begin{equation*}
	\begin{aligned}  
         \dot{\bm{e}} &= \bm{\dot{x}_f} - \bm{\dot{w}_f},\\
        \dot{\bm{\delta}} &= \bm{\dot{x}_f} - \Pi \bm{\dot{v}_f}\\
		\dot{\bm{\xi}} & = 	(L_1 \otimes I_r) 	\bm{\dot{v}_f} + (L_2 \otimes I_r) 	\dot{\bm{x_l}},\\
		\bm{\epsilon} &= (L_1 \otimes I_p) 	\boldsymbol{z_f} + (L_2 \otimes I_p) 	\boldsymbol{z_l}.
	\end{aligned}
\end{equation*}

By interconnecting  the multi-agent system \eqref{hcompact_systems} using the protocol \eqref{heter_dynamicprotocol_compact} and the regulation equations \eqref{regulation_compact}, the controlled error system satisfies the following dynamics
\begin{align}
	\begin{bmatrix}\label{compact_heter_containment}
		\dot{\bm{e}} \\
		\dot{\bm{\delta}}\\
            \dot{\bm{\xi}}
	\end{bmatrix} 
	&=
	\begin{bmatrix}
		A + G C_1 & 0 & 0 \\
		-BF & A + BF & \Pi \\
            0 & 0 & I_M \otimes S - L_1 \otimes I_r
	\end{bmatrix}
	\begin{bmatrix}
		\bm{e}\\
		\bm{\delta}\\
            \bm{\xi}
	\end{bmatrix}\nonumber \\  
 &\qquad +
	\begin{bmatrix}
		E + G D_1 \\
            E \\
		0
	\end{bmatrix} \bm{d} 
	,
	\\
	\bm{\epsilon}= &
	\begin{bmatrix}
		-(L_1 \otimes I_p) D_2 F & \!\!\!  (L_1 \otimes I_p) (C_2 + D_2 F) & \!\!\!\!  I_M \otimes R	
	\end{bmatrix}
	\begin{bmatrix}
		\bm{e}\\
		\bm{\delta}\\
            \bm{\xi}
	\end{bmatrix}.\nonumber
\end{align}
Denote ${A}_o  =	
		\begin{bmatrix}
		A + G C_1 & 0 & 0 \\
		-BF & A + BF & \Pi \\
            0 & 0 & I_M \otimes S - L_1 \otimes I_r
	\end{bmatrix}$,
		${C}_o = 	\begin{bmatrix}
		-(L_1 \otimes I_p) D_2 F & (L_1 \otimes I_p) (C_2 + D_2 F) & I_M \otimes R	
	\end{bmatrix}$, ${E}_o   = 	
		\begin{bmatrix}
		E + G D_1 \\
            E \\
		0
	\end{bmatrix}$. 
Then the impulse response matrix of the controlled error system~\eqref{compact_heter_containment} from the external disturbance $\bm{d}$ to the performance output $\bm{\epsilon}$ is equal to 
\begin{equation} \label{heter_containment_dynamic}
		T_{F, G}(t) = {C}_o e^{{A}_ot}{E}_o.
\end{equation}
Furthermore, the associated ${H}_2$ cost functional is introduced as  
\begin{equation}\label{heter_cost_FG_containment}
	J(F, G) := \int_{0}^{\infty} \text{tr}\left[ T_{F, G}^{\top}(t) T_{F, G}(t)\right] dt,
\end{equation}
which measures the performance of the system 
 \eqref{compact_heter_containment} as the square of the $\mathcal{L}_2$- norm of its impulse response \eqref{heter_containment_dynamic}.

However, since the agent communication is constrained, this $H_2$ optimal output containment control problem for the multi-agent system \eqref{hleaders_dynamic} and \eqref{hfollowers_dynamic} is non-convex and a closed-form solution is unknown to exist. As an alternative, we solve a version of this problem that requires only suboptimality.
\begin{defn}
	The protocol \eqref{heter_dynamicprotocol} is said to achieve ${H}_2$ suboptimal output
 containment control  for the heterogeneous multi-agent system \eqref{hleaders_dynamic} and \eqref{hfollowers_dynamic} if, 
	\begin{itemize}
		\item whenever the external disturbances of all followers are equal to zero, i.e., $\bm{d} = 0$,  we have 
		$\bm{z_f}(t) \rightarrow \boldsymbol{\omega_z}(t)$, $\bm{w_f}(t) \rightarrow \bm{x_f}(t)$, and $\bm{v_f}(t) \rightarrow \bm{\omega_v}(t)$ as $t \rightarrow \infty$.
		\item $J(F, G) < \gamma$, where $\gamma$ is an a priori given upper bound.
	\end{itemize}
\end{defn}

More concretely, we want to address the following problem.
\begin{prob}\label{prob1}
	Let $\gamma > 0$ be a given tolerance. Design gain matrices $F_i$ and $G_i$ for $i \in \mathcal{F}$  such that the distributed dynamic protocol \eqref{heter_dynamicprotocol} 
	achieves output containment control and   $J(F, G) < \gamma$.
\end{prob}

\subsection{Protocol Design}\label{subsec_design_heter}
In this subsection, we address Problem \ref{prob1}, and a design approach is provided to obtain appropriate gain matrices $F_1, \ldots, F_M$ and $G_1, \ldots, G_M$.

Similar to the previous section, we  first show that the ${H}_2$ suboptimal output containment control problem of the multi-agent system \eqref{hleaders_dynamic} and \eqref{hfollowers_dynamic} can be recast into a number of ${H}_2$ suboptimal control problems of a set of independent systems.

To proceed, we introduce the following $M$ auxiliary linear systems:
\begin{equation*}\label{auxiliary_mas}
	\begin{aligned} 
		\dot{\widetilde{x}}_i\left(t\right) & = A_i \widetilde{x}_i\left(t\right)  +  B_i \widetilde{u}_i\left(t\right)  +  E_i \widetilde{d}_i\left(t\right),\\
		\widetilde{\zeta}_i\left(t\right) & = C_{1i} \widetilde{x}_i\left(t\right)  +  D_{1i} \widetilde{d}_i\left(t\right),\\
		\widetilde{\epsilon}_i\left(t\right) & =  C_{2i} \widetilde{x}_i\left(t\right)  +  D_{2i} \widetilde{u}_i\left(t\right),
	\end{aligned}i = 1,\dots, M,
\end{equation*}
where $\widetilde{x}_i \in \mathbb{R}^{ni}$, $\widetilde{u}_i \in \mathbb{R}^{mi}$, $\widetilde{d}_i \in \mathbb{R}^{qi}$, $\widetilde{\zeta}_i  \in \mathbb{R}^{ri}$, and $\widetilde{\epsilon}_i \in \mathbb{R}^p$ are, respectively, the state, the coupling input, the external disturbance, the measured output, and the output to be controlled of the $i$th auxiliary system. By using the dynamic feedback controllers
\begin{equation}\label{auxiliary_dynamicprotocol}
	\begin{aligned}
		\dot{\widetilde{w}}_i & =A_i \widetilde{w}_i +  B_i  \widetilde{u}_i   +  G_i( C_{1i}\widetilde{w}_i - \widetilde{\zeta}_i) ,\\
		\widetilde{u}_i & = F_i \widetilde{w}_i,\quad i = 1,\dots, M,
	\end{aligned}
\end{equation}
the closed-loop auxiliary systems can be written as
	\begin{align}
		\begin{bmatrix}
			\dot{\widetilde{x}}_i \\
			\dot{\widetilde{w}}_i
		\end{bmatrix} 
		&=
		\begin{bmatrix}
			A_i &   B_iF_i \\
			 -G_iC_{1i} & A_i + B_iF_i + G _iC_{1i}
		\end{bmatrix} 
		\begin{bmatrix}
			\widetilde{x}_i\\
			\widetilde{w}_i
		\end{bmatrix}\\
  &\qquad +
		\begin{bmatrix}
			E_i \\
			-G_iD_{1i}
		\end{bmatrix} \widetilde{d}_i
		, \nonumber
		\\
		\widetilde{\epsilon}_i &= 
		\begin{bmatrix}
				C_{2i} &   D_{2i} F_i
		\end{bmatrix}
		\begin{bmatrix}
			\widetilde{x}_i\\
			\widetilde{w}_i
		\end{bmatrix},\quad i = 1,\dots, M. \label{decoupled_containment_dynamic}
	\end{align}
Denote $\widetilde{A}_i = \begin{bmatrix}
	A_i &   B_iF_i \\
	-G_iC_{1i} & A_i + B_iF_i + G _iC_{1i} 
\end{bmatrix}$, $\widetilde{E}_i = \begin{bmatrix}
	E_i \\
    -G_i D_{1i}
\end{bmatrix}$ and $ \widetilde{C}_i= \begin{bmatrix}
	 C_{2i} &   D_{2i} F_i
\end{bmatrix}$. Therefore, the impulse response matrix of each system~\eqref{decoupled_containment_dynamic} from the disturbance $\widetilde{d}_i$ to the output $\widetilde{\epsilon}_i$ is equal to
	$\widetilde{T}_{i, F_i, G_i}(t) = \widetilde{C}_i e^{\widetilde{A}_it}\widetilde{E}_i$ and the associated $H_2$ cost functional is defined as
		$J_i(F_i, G_i) := \int_{0}^{\infty} \text{tr}\left[ \widetilde{T}_{i, F_i, G_i}^{\top}(t) \widetilde{T}_{i, F_i, G_i}(t)\right] dt, i = 1,\dots, M.$

The following theorem shows that $H_2$ suboptimal output containment control is achieved if $H_2$ suboptimal
control problems of $M$ auxiliary systems are solved.
\begin{thm}\label{thm1}
	For $i = 1,\dots, M$. Let $F_i \in \mathbb{R}^{m_i \times n_i}$, $G_i \in \mathbb{R}^{n_i \times r_i}$ and $\gamma > 0$ be given. 
    The dynamic protocol~\eqref{heter_dynamicprotocol} with gain matrices $F_i$ and $G_i$ achieves output containment control for the multi-agent system \eqref{hleaders_dynamic} and  \eqref{hfollowers_dynamic} if and only if the dynamic controllers \eqref{auxiliary_dynamicprotocol} with $F_i$ and $G_i$ internally stabilize all the $M$ closed-loop systems \eqref{decoupled_containment_dynamic}. Furthermore, $J(F, G) < \gamma$ if $\sum_{i = 1}^{M} J_i(F_i, G_i) < \frac{\gamma}{\lambda_{M}^2},$ where $\lambda_M$ is the largest eigenvalue of the matrix $L_1$.
\end{thm}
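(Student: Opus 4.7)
The plan is to mirror the two-part argument of Theorem~\ref{thm7}: first exploit the block-triangular structure of \eqref{compact_heter_containment} to obtain the stability equivalence, then apply a similarity transformation that exhibits the impulse response of the error system as the Kronecker product $(L_1 \otimes I_p)$ with the block-diagonal stack of auxiliary impulse responses.

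For the \emph{stability equivalence}, I would observe that $A_o$ in \eqref{compact_heter_containment} is block-triangular under the variable reordering $(\bm{e}, \bm{\xi}, \bm{\delta})$, so
\[
\sigma(A_o) = \sigma(A+GC_1) \cup \sigma(A+BF) \cup \sigma(I_M \otimes S - L_1 \otimes I_r).
\]
Because $A,B,C_1,F,G$ are block-diagonal in the follower index, the first two sets coincide with $\bigcup_{i=1}^M \sigma(A_i+G_iC_{1i})$ and $\bigcup_{i=1}^M \sigma(A_i+B_iF_i)$. The third set lies automatically in the open left half-plane: its elements are $\{s - \lambda : s \in \sigma(S),\ \lambda \in \sigma(L_1)\}$, with $\sigma(S) \subset i\mathbb{R}$ and $\sigma(L_1) \subset (0,\infty)$ by Lemma~\ref{lemma8}. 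On the auxiliary side, the same similarity $(\widetilde{x}_i,\widetilde{w}_i) \mapsto (\widetilde{w}_i, \widetilde{x}_i - \widetilde{w}_i)$ used in Theorem~\ref{thm7} triangularizes each system \eqref{decoupled_containment_dynamic} into diagonal blocks $A_i+B_iF_i$ and $A_i+G_iC_{1i}$. Hence the two stability conditions are equivalent, and combined with the regulator equations \eqref{regulation_compact}, internal stability yields $\bm{z_f}(t) \to \boldsymbol{\omega_z}(t)$, $\bm{w_f}(t) \to \bm{x_f}(t)$, and $\bm{v_f}(t) \to \boldsymbol{\omega_v}(t)$ as required by Definition~\ref{definition4}.

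For the \emph{cost bound}, since the third block of $E_o$ is zero and the third block-row of $A_o$ is independent of $(\bm{e},\bm{\delta})$, the impulse response of $\bm{\xi}$ to $\bm{d}$ vanishes identically, so $\bm{\epsilon}$ depends only on $\bm{e}$ and $\bm{\delta}$. Introducing the follower-wise change of variables $\bm{w} := \bm{\delta} - \bm{e}$, a direct computation should give
\begin{align*}
\begin{bmatrix}\dot{\bm{w}}\\ \dot{\bm{e}}\end{bmatrix} &= \begin{bmatrix} A+BF & -GC_1 \\ 0 & A+GC_1\end{bmatrix}\begin{bmatrix}\bm{w}\\ \bm{e}\end{bmatrix} + \begin{bmatrix}-GD_1\\ E+GD_1\end{bmatrix}\bm{d},\\
\bm{\epsilon} &= (L_1\otimes I_p)\bigl[(C_2+D_2F)\bm{w} + C_2\bm{e}\bigr],
\end{align*}
which, component-wise in the follower index, coincides with \eqref{decoupled_containment_dynamic} in its triangularized $(\widetilde{w}_i,\widetilde{e}_i)$ coordinates, the output being obtained by mixing the stacked auxiliary outputs through $L_1\otimes I_p$. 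Consequently,
\[
T_{F,G}(t) = (L_1\otimes I_p)\,\mathrm{blockdiag}\bigl(\widetilde{T}_{1,F_1,G_1}(t),\ldots,\widetilde{T}_{M,F_M,G_M}(t)\bigr).
\]

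A block-wise trace calculation using $L_1^\top L_1 = L_1^2$ then gives
\[
J(F,G) = \sum_{i=1}^M (L_1^2)_{ii}\, J_i(F_i,G_i),
\]
and since $L_1$ is symmetric positive definite, each diagonal entry satisfies $(L_1^2)_{ii} \leq \lambda_{\max}(L_1^2) = \lambda_M^2$. Therefore $J(F,G) \leq \lambda_M^2 \sum_{i=1}^M J_i(F_i,G_i) < \gamma$ whenever $\sum_{i=1}^M J_i(F_i,G_i) < \gamma/\lambda_M^2$. The main technical obstacle will be verifying the algebraic cancellations behind the change of variables $\bm{w} := \bm{\delta} - \bm{e}$, in particular checking that forming $\dot{\bm{\delta}} - \dot{\bm{e}}$ eliminates the $E$-disturbance channel and the $\Pi\bm{\xi}$ coupling and leaves only the observer-error contribution $-GC_1\bm{e} - GD_1\bm{d}$; once this identity is established, the Kronecker-structure accounting on the cost is routine.
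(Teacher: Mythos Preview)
Your argument is correct and follows essentially the same route as the paper: the same block-triangular spectral analysis for the stability equivalence, and the same identification of the $(\bm e,\bm\delta)$-subsystem of \eqref{compact_heter_containment} (with $\bm\xi\equiv 0$) with the block-diagonal stack of auxiliary closed-loops, premultiplied on the output by $L_1\otimes I_p$. Your endpoint is in fact slightly sharper than the paper's: you obtain the exact identity $J(F,G)=\sum_{i=1}^M (L_1^2)_{ii}\,J_i(F_i,G_i)$ and then bound each diagonal entry by $\lambda_M^2$, whereas the paper applies the operator inequality $\lambda_M^2 I_{pM}\geq (L_1\otimes I_p)^2$ directly inside the trace.

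One small clarification on the point you flag as a remaining obstacle: forming $\dot{\bm\delta}-\dot{\bm e}$ does \emph{not} cancel the $\Pi\bm\xi$ term---it survives in $\dot{\bm w}$, since $\dot{\bm e}$ carries no $\bm\xi$-contribution. The reason $\Pi\bm\xi$ can be dropped is the one you already gave earlier: for the impulse response one has zero initial conditions and the $\bm\xi$-block of $E_o$ vanishes, so $\bm\xi(t)\equiv 0$. The $E$-channel cancellation in $\dot{\bm w}$, on the other hand, does go through exactly as you anticipate.
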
	
\begin{proof}
    We first show that the output containment control for the multi-agent system \eqref{hleaders_dynamic} and  \eqref{hfollowers_dynamic} is achieved if and only if the dynamic controllers internally stabilize all the $M$ closed-loop systems \eqref{decoupled_containment_dynamic}.
    
    Note that the output containment control is achieved if and only if the matrix ${A}_o$ of the controlled error system~\eqref{compact_heter_containment} is Hurwitz, i.e., the matrices $A + BF$, $A + GC_1$, and $I_M \otimes S - L_1 \otimes I_r$ are Hurwitz. Now, recall that the eigenvalues of $S$ lie on the imaginary axis and $L_1$ has only positive real eigenvalues, then the matrix $I_M \otimes S - L_1 \otimes I_r$ is Hurwitz. What remains to show is that $A + BF$ and $A + GC_1$ are Hurwitz, or alternatively, $A_i + B_iF_i$, $A_i + G_iC_{1i}$ for $i \in \mathcal{F}$ are Hurwitz. 
    
    Now, we look at the auxiliary systems \eqref{decoupled_containment_dynamic}. We show that the auxiliary systems are internally stable if and only if $A_i + B_iF_i$, $A_i + G_iC_{1i}$ for $i \in \mathcal{F}$ are Hurwitz. By applying the state transformation 
    	$\begin{bmatrix}
    		\widetilde{e}_i \\
    		\widetilde{x}_i
    	\end{bmatrix} 
    	=
    	\begin{bmatrix}
                I_{ni} & -I_{ni}\\
    		I_{ni}  & 0 
    	\end{bmatrix}
    	\begin{bmatrix}
    		\widetilde{x}_i \\
    		\widetilde{w}_i
    	\end{bmatrix}$ on the auxiliary systems \eqref{decoupled_containment_dynamic}, the matrices $\widetilde{A}_i = \begin{bmatrix}
    	A_i &   B_iF_i \\
    	G_iC_{1i} & A_i + B_iF_i + G _iC_{1i} 
    \end{bmatrix}$ can be transformed into $\widetilde{A}_{ei}: = \begin{bmatrix}
    		A_i + G _iC_{1i}&  0 \\
    		- B_iF_i & A_i + B_iF_i 
    \end{bmatrix}$ and likewise obtain $\widetilde{C}_{ei} := \begin{bmatrix}
    	 -D_{2i} F_i &  C_{2i} + D_{2i} F_i
    \end{bmatrix}, \widetilde{E}_{ei} := \begin{bmatrix}
    	E_i + G_i D_{1i} \\
        E_i
    \end{bmatrix}$. It is easy to see that all the
$M$ closed-loop systems \eqref{decoupled_containment_dynamic} are internally stable if and only if the matrices $A_i + B_iF_i$ and $A_i + G _iC_{1i}$ for $i = 1,\dots, M$  are Hurwitz. It then
follows from the above that output containment control for the multi-agent system \eqref{hleaders_dynamic} is achieved if and only if the
$M$ auxiliary closed-loop systems are internally stabilized by the controllers \eqref{auxiliary_dynamicprotocol}.

Next, let $F_i$ and $G_i$ be such that matrices $A_i + B_iF_i$, $A_i + G _iC_{1i}$ are Hurwitz for $i = 1,\dots, M$. We prove that if $\sum_{i = 1}^{M} J_i(F, G) < \frac{\gamma}{\lambda_{M}^2}$ holds then $J(F, G) < \gamma$. Recall that $J_i(F_i, G_i) = \int_{0}^{\infty} \text{tr}\left[ \widetilde{T}_{i, F_i, G_i}^{\top}(t) \widetilde{T}_{i, F_i, G_i}(t)\right] dt$ and $\widetilde{T}_{i, F_i, G_i}(t) = \widetilde{C}_i e^{\widetilde{A}_it}\widetilde{E}_i$, then it holds that
\begin{equation}\label{Jcost_heter1}
		\begin{aligned}
			&\lambda_M^2\sum_{i = 1}^{M} J_i(F_i, G_i) \\
            = & \int_{0}^{\infty} \lambda_M^2 \sum_{i = 1}^{M}  \text{tr}( (\widetilde{C}_ie^{\widetilde{A}_it }\widetilde{E}_i)^{\top} (\widetilde{C}_ie^{\widetilde{A}_it }\widetilde{E}_i)) dt\\
             = &\int_{0}^{\infty}  \lambda_M^2\text{tr}\left[ (\widetilde{C}e
                ^{\widetilde{A} t}\widetilde{E})^{\top}(\widetilde{C}e^{\widetilde{A}t}\widetilde{E})\right] dt \\
            \geq & \int_{0}^{\infty} \text{tr}\left[ (\widetilde{C}_e e^{\widetilde{A}_e t}\widetilde{E}_e)^{\top}(L_1 \otimes I_p)^{\top}(L_1 \otimes I_p)(\widetilde{C}_e e^{\widetilde{A}_et}\widetilde{E}_e)\right] dt, 
	\end{aligned}
 \end{equation}
where $\widetilde{C} e^{\widetilde{A} t}\widetilde{E} = {\rm blockdiag}(\widetilde{C}_1 e^{\widetilde{A}_1 t}\widetilde{E}_1,\dots, \widetilde{C}_M e^{\widetilde{A}_M t}\widetilde{E}_M)= \widetilde{C}_e e^{\widetilde{A}_e t}\widetilde{E}_e = {\rm blockdiag}(\widetilde{C}_{e1} e^{\widetilde{A}_{e1} t}\widetilde{E}_{e1},\dots, \widetilde{C}_{eM} e^{\widetilde{A}_{eM} t}\widetilde{E}_{eM})$ and in the last step of \eqref{Jcost_heter1} we have used the fact that $\lambda_M I_{pM} \geq L_1 \otimes I_p$.
In fact, it is not difficult to see that
	\begin{equation}\label{Jcost_heter2}
		\begin{aligned}
			&\int_{0}^{\infty} \text{tr}\left[ (\widetilde{C}_e e^{\widetilde{A}_e t}\widetilde{E}_e)^{\top}(L_1 \otimes I_p)^{\top}(L_1 \otimes I_p)(\widetilde{C}_e e^{\widetilde{A}_et}\widetilde{E}_e)\right] dt \\  &= \int_{0}^{\infty} \text{tr}\left[ ({C}_o e^{{A}_o t}{E}_o)^{\top} ({C}_o e^{{A}_ot}{E}_o)\right] dt = J(F ,G). 
		\end{aligned}
	\end{equation}
Subsequently, it follows from \eqref{Jcost_heter1} and \eqref{Jcost_heter2} that
	\begin{equation*}
			J(F ,G)\leq \lambda_M^2\sum_{i = 1}^{M} J_i(F_i, G_i) <  \gamma.
	\end{equation*}
 This completes the proof.
\end{proof}

By applying  Theorem \ref{thm1}, the ${H}_2$ suboptimal output containment control problem for multi-agent system \eqref{hleaders_dynamic} and  \eqref{hfollowers_dynamic} can be recast into a number of ${H}_2$ suboptimal control problems of $M$ auxiliary closed-loop systems \eqref{decoupled_containment_dynamic} by dynamic controllers \eqref{auxiliary_dynamicprotocol}.

In the following theorem, we provide a design method to compute appropriate gain matrices $F_1, \ldots, F_M$ and $G_1, \ldots, G_M$ of the dynamic protocol~\eqref{heter_dynamicprotocol} that achieves $H_2$ suboptimal ouput containment control.
\begin{thm}\label{thm2}
Let $\gamma > 0$ be given. For $i = 1,\dots, M$, assume that $D_{1i}E_i^{\top}  =0$, $D_{2i}^{\top} C_{2i} =0$, $D_{1i} D_{1i}^{\top} =I_{ri}$ and ${D}_{2i}^\top {D}_{2i} =I_{mi}$. The dynamic protocol \eqref{heter_dynamicprotocol} with gain matrices $F_i$ and $G_i$ for $i \in \mathcal{F}$ achieves the output containment control for multi-agent system \eqref{hleaders_dynamic} and \eqref{hfollowers_dynamic}, and $J(F, G) < \gamma$ if there exists $P_i > 0$ and $Q_i > 0$ satisfying
	\begin{align}\label{thm2_p}
		& A_i^{\top} P_i  + P_i A_i^{\top} - P_iB_iB_i^\top P_i + C_{2i}C_{2i}^\top< 0, \\
		& A_i Q_i + Q_i A_i^{\top} - Q_i C_{1i}^{\top} C_{1i} Q_i + E_i E_i ^{\top}< 0, \label{thm2_q}\\
	    &{\rm tr} \left(C_{1i} Q_i P_i Q_iC_{1i}^\top  \right) + {\rm tr}\left(  C_{2i} Q_i  C_{2i}^\top  \right)  < \frac{\gamma}{M\lambda_{M}^2}. \label{thm2_gamma}
	\end{align}
\end{thm}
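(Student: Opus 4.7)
My plan is to reduce Theorem~\ref{thm2} to an application of Lemma~\ref{lem_1} on each of the $M$ independent auxiliary systems \eqref{decoupled_containment_dynamic}, followed by an appeal to Theorem~\ref{thm1}. The natural gain choices, guided by the structure of Lemma~\ref{lem_1}, are $F_i := -B_i^\top P_i$ and $G_i := -Q_i C_{1i}^\top$, where $P_i, Q_i$ are the positive definite matrices supplied by the hypothesis and the simplification $(D_{1i} D_{1i}^\top)^{-1} = I_{r_i}$ is used. With these choices, I would instantiate Lemma~\ref{lem_1} with the identifications $\bar A = A_i$, $\bar B = B_i$, $\bar C_1 = C_{1i}$, $\bar C_2 = C_{2i}$, $\bar D_1 = D_{1i}$, $\bar D_2 = D_{2i}$, $\bar E = E_i$.

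The first step is to verify that the Riccati inequalities \eqref{thm2_p} and \eqref{thm2_q} match the hypotheses of Lemma~\ref{lem_1} under these gain choices. Substituting $F_i = -B_i^\top P_i$ into $(C_{2i} + D_{2i} F_i)^\top (C_{2i} + D_{2i} F_i)$ and using $D_{2i}^\top C_{2i} = 0$ and $D_{2i}^\top D_{2i} = I_{m_i}$ collapses this term to $C_{2i}^\top C_{2i} + P_i B_i B_i^\top P_i$, and combining it with $(A_i + B_i F_i)^\top P_i + P_i(A_i + B_i F_i) = A_i^\top P_i + P_i A_i - 2 P_i B_i B_i^\top P_i$ yields exactly the left-hand side of \eqref{thm2_p}. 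The second hypothesis of Lemma~\ref{lem_1} reduces directly to \eqref{thm2_q}. This algebraic reconciliation is the step that needs the most care; the remainder is mechanical.

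Next, using \eqref{thm2_gamma}, for each $i$ I would pick $\epsilon_i > 0$ small enough so that $\gamma_i := {\rm tr}(C_{1i} Q_i P_i Q_i C_{1i}^\top) + {\rm tr}(C_{2i} Q_i C_{2i}^\top) + \epsilon_i < \gamma/(M \lambda_M^2)$. Applying Lemma~\ref{lem_1} with tolerance $\gamma_i$ to the $i$th auxiliary system then gives that the dynamic controller \eqref{auxiliary_dynamicprotocol} with $F_i, G_i$ internally stabilizes \eqref{decoupled_containment_dynamic} and $J_i(F_i, G_i) < \gamma_i$. Summing yields
\begin{equation*}
\sum_{i=1}^M J_i(F_i, G_i) \;<\; \sum_{i=1}^M \gamma_i \;<\; \frac{\gamma}{\lambda_M^2}.
\end{equation*}
Finally, invoking Theorem~\ref{thm1}, the internal stability of all $M$ auxiliary closed-loop systems implies that the distributed protocol \eqref{heter_dynamicprotocol} with $\{F_i\},\{G_i\}$ achieves output containment control for \eqref{hleaders_dynamic}--\eqref{hfollowers_dynamic}, and the bound $\sum_i J_i(F_i, G_i) < \gamma/\lambda_M^2$ gives $J(F,G) < \gamma$, completing the proof.
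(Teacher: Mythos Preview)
Your proposal is correct and follows essentially the same route as the paper: choose $F_i=-B_i^\top P_i$, $G_i=-Q_iC_{1i}^\top$, verify that \eqref{thm2_p}--\eqref{thm2_q} are precisely the Riccati hypotheses of Lemma~\ref{lem_1} for each auxiliary system under the identifications $\bar A=A_i,\ldots,\bar E=E_i$, absorb slack via $\epsilon_i$ to get $J_i(F_i,G_i)<\gamma_i$ with $\sum_i\gamma_i<\gamma/\lambda_M^2$, and then invoke Theorem~\ref{thm1}. Your explicit algebraic check that substituting $F_i=-B_i^\top P_i$ and using $D_{2i}^\top C_{2i}=0$, $D_{2i}^\top D_{2i}=I_{m_i}$ collapses the closed-loop Lyapunov inequality to \eqref{thm2_p} is in fact a bit more detailed than the paper's one-line assertion of equivalence.
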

\begin{proof}
Recall that $D_{1i}E_i^{\top}  =0$, $D_{2i}^{\top} C_{2i} =0$, $D_{1i} D_{1i}^{\top} =I_{ri}$ and ${D}_{2i}^\top {D}_{2i} =I_{mi}$. Then \eqref{thm2_p} is equivalent to 
\begin{equation}\label{riccati_F}
\begin{aligned}
& (A_i+B_iF_i)^{\top}P_i  + P_i(A_i+ B_i F_i) \\
&
\qquad +(C_{2i} + D_{2i} F_i)^\top  (C_{2i} + D_{2i} F_i) < 0.
\end{aligned}
\end{equation}
where $F_i := -B_i^{\top}P_i$.

By \eqref{thm2_gamma}, for $\epsilon_i >0$  sufficiently small, we have $\sum_{i=1}^{M} \gamma_i <\frac{\gamma}{\lambda_{M}^2}$, where $\gamma_i:= [{\rm tr}\left(C_{1i} Q_i P_i Q_iC_{1i}^\top \right) + {\rm tr}\left(C_{2i} Q_i  C_{2i}^\top\right)] + \epsilon_i$.
By taking $\bar{A} = A_i$, $\bar{B} = B_i$, $\bar{C}_1 =  C_{1i}$, $\bar{D}_1 = D_{1i}$, $\bar{C}_2 =   C_{2i}$, $\bar{D}_2 = D_{2i}$, and $\bar{E} =  E_i$, since there exist $P_i > 0$, $Q_i  > 0$ satisfying \eqref{thm2_q}, \eqref{riccati_F}, and \eqref{thm2_gamma}, it follows from Lemma \ref{lem_1} that all $M$ auxiliary systems are internally stabilized by the dynamic controllers \eqref{auxiliary_dynamicprotocol} with $F_i := -B_i^{\top}P_i$, $G_i := -Q_iC_{1i} ^\top$ for all $i = 1,\dots, M$, and $J_i(F_i, G_i) < \gamma_i$. Therefore, $\sum_{i = 1}^{M} J_i(F_i, G_i) < \sum_{i = 1}^{M} \gamma_i < \frac{\gamma}{\lambda_{M}^2}$.

Consequently, since all  $M$ closed-loop systems \eqref{decoupled_containment_dynamic} are internally stable and $\sum_{i = 1}^{M} J_i(F_i, G_i) < \frac{\gamma}{\lambda_{M}^2}$, it can be concluded from Theorem \ref{thm1} that the dynamic feedback protocol \eqref{heter_dynamicprotocol} achieves output containment control for the multi-agent system \eqref{hleaders_dynamic}, \eqref{hfollowers_dynamic}, and $J(F, G)<\gamma$.
\end{proof}

Note that in Theorem \ref{thm2}, the assumptions $D_{1i} D_{1i}^{\top} =I_{ri}$ and ${D}_{2i}^\top {D}_{2i} =I_{mi}$ are made to simplify notation and can be smoothly relaxed to the regularity condition $D_{1i} D_{1i}^{\top} > 0$ and ${D}_{2i}^\top {D}_{2i} > 0$ as in Lemma \ref{lem_1}.

\begin{rem}\label{remark1}
	Theorem \ref{thm2} states that the gains $F_i = -B_i^\top P_i$ and $G_i= -Q_iC_{1i} ^\top$ of dynamic protocol \eqref{heter_dynamicprotocol} is suboptimal by choosing feasible $P_i$ and $Q_i$ for $i \in \mathcal{F}$. Then the question arises: how do we find the smallest upper bound $\gamma$ such that ${\rm tr} \left(C_{1i} Q_i P_i Q_iC_{1i}^\top  \right) + {\rm tr}\left(  C_{2i} Q_i  C_{2i}^\top  \right)  < \frac{\gamma}{M\lambda_{M}^2}$ for $i = 1,\dots, M$?
    It is easy to note that in general the smaller $P_i$ and $Q_i$ lead to smaller ${\rm tr}\left(C_{1i} Q_i P_i Q_iC_{1i}^\top \right)$ and $ {\rm tr}\left(C_{2i} Q_i  C_{2i}^\top\right)$, consequently, the smaller $\gamma$. To find the feasible $\gamma$ as small as possible, we could find $P_i$ and $Q_i$ as small as possible.
Indeed, we can compute $P_i$ satisfying \eqref{thm2_p} by finding the solution $P_i(\delta_i) > 0$ with $\delta_i > 0 $ which satisfies
	\begin{equation*}
         A_i^{\top} P_i  + P_i A_i^{\top} - P_iB_iB_i^\top P_i + C_{2i}C_{2i}^\top + \delta_i I_{ni} = 0. 
	\end{equation*}
and similarly, $Q_i$ can be computed by find the solution $Q_i(\eta_i) > 0 $ with $\eta_i > 0$ which satisfies 
	\begin{equation*}
		A_i Q_i + Q_i A_i^{\top} - Q_i C_{1i}^{\top} C_{1i} Q_i + E_i E_i ^{\top}+\eta_i I_{ni} = 0.
	\end{equation*}
	Consequently, we can find the smallest solutions $P_i$ and $Q_i$ by choosing $\delta_i > 0$ and $\eta_i > 0$ very close to $0$. Therefore, we find the ‘best’ small $\gamma$ in the sense as explained above.
\end{rem}
\begin{rem}\label{remark2}
As a final remark, it is worth noting that the assumption regarding all eigenvalues of $S$ lying on the imaginary axis can be relaxed. More concretely, this assumption can be eliminated by considering the following protocol: 
\begin{equation*}
	\begin{aligned}
            \dot{w}_i & = A w_i + B u_i + G_i\left(C_{1i}w_i - y_i \right),\\
            \dot{v}_i & = S v_i + K\left(\sum_{j=1}^M a_{ij} (v_j - v_i) + \sum_{j={M+1}}^N a_{ij} (x_j - v_i)\right),\\
		u_i & = F_i(w_i - \Pi_i v_i) + \Gamma_i v_i, \qquad i \in \mathcal{F},
	\end{aligned}
\end{equation*} 
where $K$ is an additional gain matrix to be designed, see, e.g., \cite{jiao2016distributed}. To compute a suitable gain $K$ of the auxiliary reference system is similar to finding a gain matrix of static state feedback control protocol for solving homogeneous containment control problem, see, e.g., \cite{li2013distributed,yuan2023}.
\end{rem}

\section{Simulation Examples}\label{sec_examples}
In  this  section,  two  simulation  examples  are  provided  to  validate  the  effectiveness  of  our proposed protocols.
\subsection{Example for state containment control of homogeneous linear multi-agent systems}
In this subsection, we   use a simulation example to illustrate the performance of our designed dynamic protocol \eqref{containment_dynamicprotocol} to achieve $H_2$ suboptimal state containment control. 
Consider a homogeneous multi-agent system consisting of three leaders of the form \eqref{containment_leadersdynamic} and six followers of the form   \eqref{containment_followersdynamic}, where 
$A =
\begin{bmatrix}
	0 & 0 & -1 \\
	0 & 0 & 2 \\
	1 & 0 & -1.5 \\
\end{bmatrix}$,
$B = 
\begin{bmatrix}
	1 \\
	1.2\\
	1.5
\end{bmatrix},$
$E = 
\begin{bmatrix}
	0.1 & 0 & 0\\
	0.1 & 0 & 0\\
	0 & 0 & 0.1
\end{bmatrix},$
$C_1 = 
\begin{bmatrix}
	1 & 1 & 0
\end{bmatrix},$
$D_1 =
\begin{bmatrix}
	0 & 1& 0
\end{bmatrix}$,
$C_2 = 
\begin{bmatrix}
	0.2 & 0.2 &0.2 \\
	0.2 & 0.2 & 0.2\\
	0 & 0 & 0
\end{bmatrix},$
$D_2 =
\begin{bmatrix}
	0 & 0 & 1
\end{bmatrix}^\top$. 
The pair $(A, B)$ is stabilizable and the pair$(C_1, A)$ is detectable. We also have ${D_1} {E}^{\top}  = 
\begin{bmatrix}
	0 & 0 & 0
\end{bmatrix}$, ${D_2}^{\top}{C_2}  = 
\begin{bmatrix}
	0 & 0 & 0
\end{bmatrix}$, ${D_2}^{\top}{D_2}=1,$
and  ${D_1} {D_1}^\top =1$.
\begin{figure}[t]
	\centering
	\includegraphics[height=4cm]{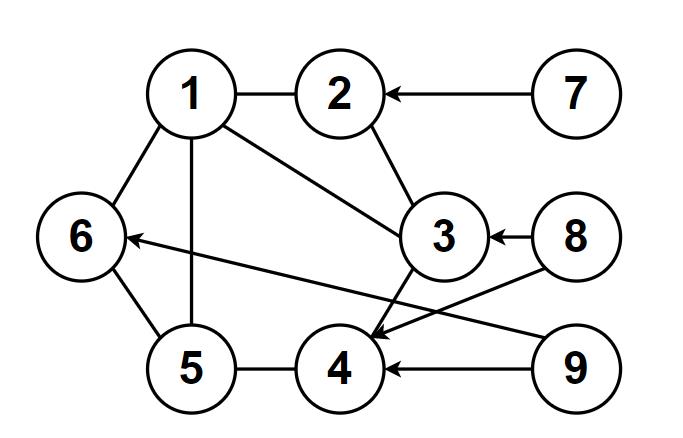}
	\caption{Communication topology between the leaders and the followers} \label{graph4.31}
\end{figure}

For illustration, let the communication graph $\mathcal{G}$ be given by Figure \ref{graph4.31}, where nodes 7, 8, and 9 denote the leaders and the other nodes denote the followers. Correspondingly, due to the particular form of the Laplacian matrix associated with $\mathcal{G}$,  the matrix $L_1$ of the Laplacian matrix is 
\begin{equation}\label{L1}
	L_1 =
	\begin{bmatrix}
		4  &  -1  &  -1  &   0  &   -1  &  -1  \\
		-1  &   3  &  -1   &  0  &   0   &  0  \\
		-1  &  -1   &  4  &  -1  &   0   &  0  \\
		0   &  0  &  -1   &  4  &  -1  &   0  \\
		-1  &   0  &   0  &  -1  &  3   & -1   \\
		-1  &   0   &  0   &  0   & -1   &  3  \\
	\end{bmatrix}.
\end{equation}
The smallest and largest eigenvalue of $L_1$ are computed to be $\lambda_1 = 0.6856$ and $\lambda_6 = 5.8245$. Now the proposed method in Theorem~\ref{thm8} is used to design a dynamic feedback protocol for solving the state containment control problem while satisfying $J(F, G) < \gamma$. Let the desired upper bound for the cost be $\gamma = 289$. We first compute a solution $P > 0$ from Theorem \ref{thm8} in \eqref{thm8_c2_p} by solving
\begin{equation*}\label{4.3p_soluiton}
	A^{\top} P + P A + ({c_p}^2 \lambda_{1}^4 - 2 {c_p} \lambda_{1}) PB B^{\top} P +\lambda_{6}^4 {C_2}^{\top} {C_2} + \delta I_3 =0
\end{equation*}
with $\delta = 0.001$ and  ${c_p} = \frac{2}{(\lambda_1+\lambda_6)(\lambda_1^2+\lambda_{6}^2)}=0.0089$ and compute a solution $Q > 0 $ in \eqref{thm8_c1_q} by solving
\begin{equation*}\label{4.3q_solution}
	A Q + Q A^{\top} - Q C_1^{\top} C_1 Q + \frac{1}{\lambda_1^2}E E ^{\top} + \eta I_3 = 0 
\end{equation*} 
with $\eta = 0.001$, which are the `best' choice to find a small upper bound $\gamma$ in the sense as explained in Remark \ref{remark4}. Then, with the command \texttt{icare} in Matlab
we compute the gain matrices $F = -{c_p}B^\top P = [-0.9439\ -0.7750\ -0.6738]$ and $G =-Q{C_1}^\top = [-0.0502\ -0.3429\ -0.0337]^\top$. Moreover, we compute $6({\rm tr}({C_1}QPQ{C_1}^\top) + \lambda_{6}^4{\rm tr}({C_2}Q{C_2}^\top)) = 288.2621
$, which is indeed smaller than the upper bound $\gamma = 289$.

Then by using the command \texttt{norm(sys,2)} in Matlab, the actual ${H}_2$ norm of the controlled error system \eqref{compact_containment_dynamic} is computed to be
\begin{equation*}
	||T_{F, G}||_{{H}_2} = 3.2966,
\end{equation*}
which is indeed smaller than $ \sqrt{\gamma} = \sqrt{289} = 17$.

Now we compare the performance of our protocol with that of the protocol proposed in \cite{li2017cooperative}, where the associated actual ${H}_2$ norm of the controlled system \eqref{compact_containment_dynamic} is computed as
\begin{equation*}
	||T_{K, L}||_{{H}_2} = 32.9022.
\end{equation*}
This result indicates that the performance of the dynamic protocol in \cite{li2017cooperative} is not comparable to that of our proposed dynamic protocol since its associated actual ${H}_2$ norm is much bigger than that of our protocol, i.e., $||T_{K, L}||_{{H}_2} = 32.9022 > ||T_{F, G}||_{{H}_2} = 3.2966$.

To statistically analyze the performance of our protocol in comparison to the protocol in \cite{li2017cooperative}, we introduced the disturbance matrix $E =
\begin{bmatrix}
e_1 & 0 & e_4\\
e_2 & 0 & e_5\\
e_3 & 0 & e_6
\end{bmatrix}$ with random values $||e_i||\leq 0.5, i= 1, \dots, 6$, which also satisfies the condition ${D_1} {E}^{\top} =
\begin{bmatrix}
0 & 0 & 0
\end{bmatrix}$. By involving 100 random sets of $E$, we proceeded to compare the actual ${H}_2$ norm of the controlled system \eqref{compact_containment_dynamic} using both protocols, and the results are depicted in Figure~\ref{figure_statistical}. The figure illustrates that our protocol outperforms the protocol in \cite{li2017cooperative}, in the sense that  the actual ${H}_2$ norm when using our protocol is significantly smaller than that using the protocol in \cite{li2017cooperative} for all random sets.
\begin{figure}[t]
	\centering
	\includegraphics[height=6cm]{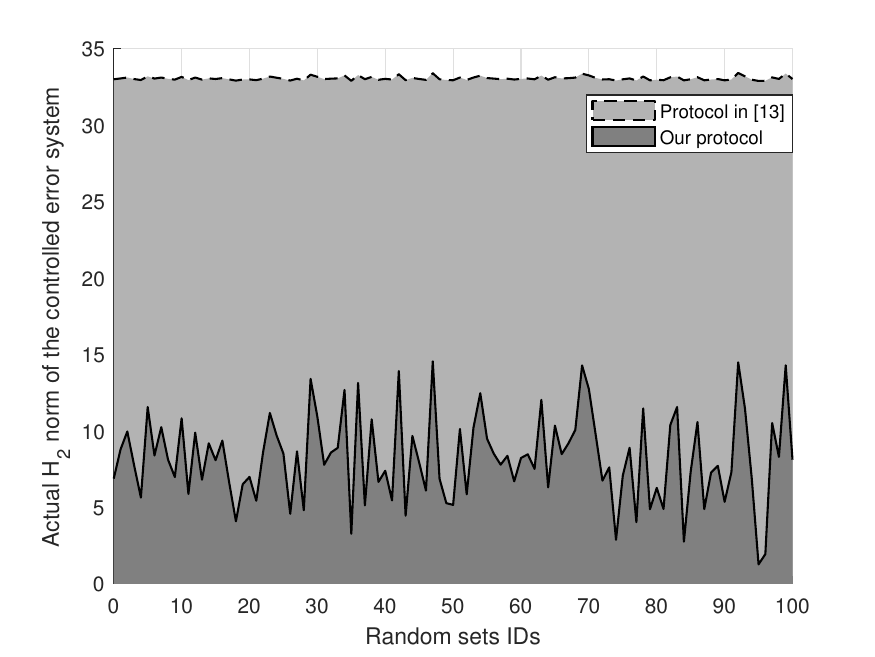}
	\caption{Actual ${H}_2$ norm of the controlled system by using two protocols with $100$ random sets of disturbance matrix $E$} \label{figure_statistical}
\end{figure}
\begin{figure*}[t]
	\begin{minipage}[t]{0.49\textwidth}
		\centering
		\includegraphics[height=7.5cm,width=9.5cm]{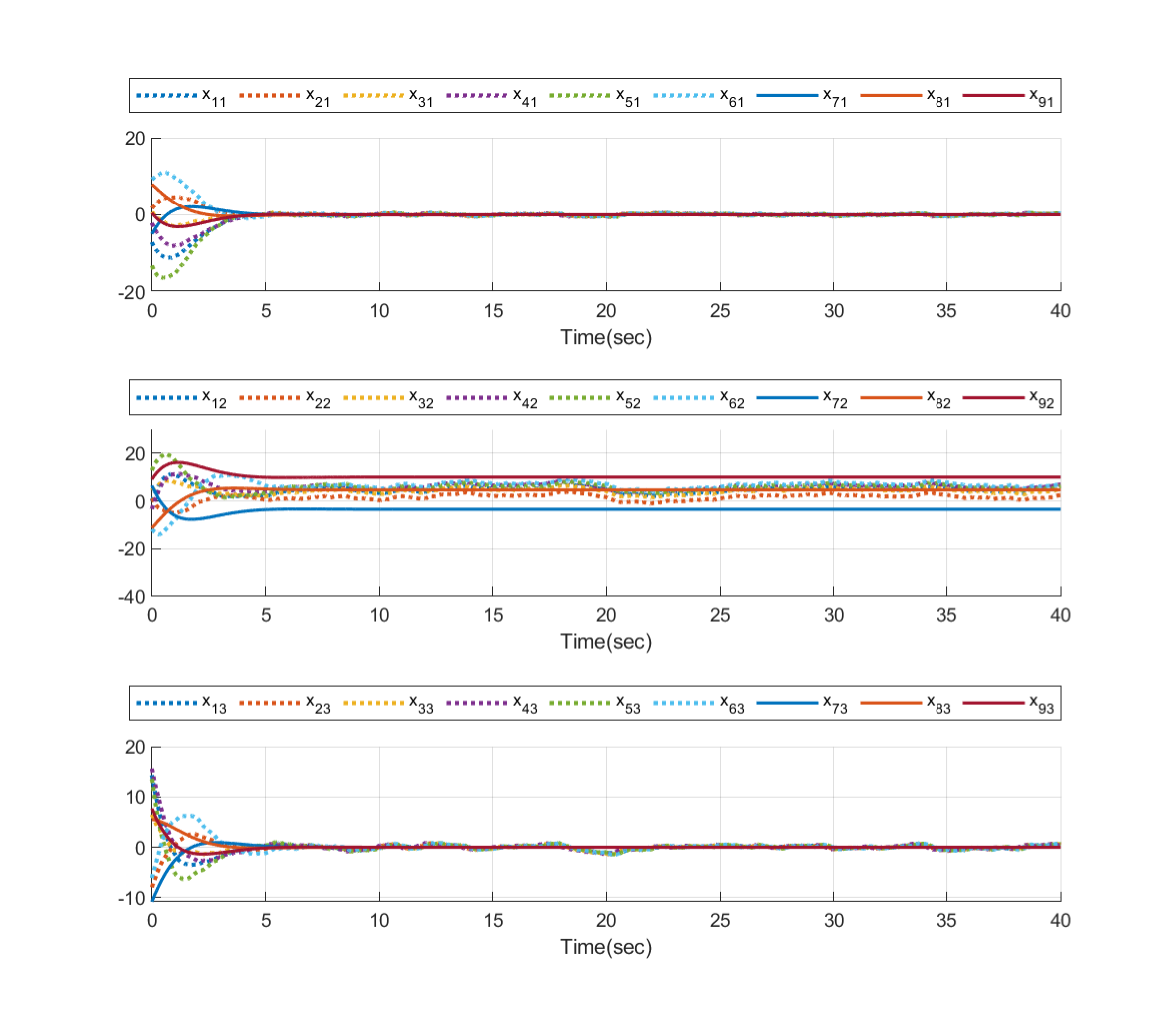}
	\end{minipage}
	\begin{minipage}[t]{0.49\textwidth}
		\centering
		\includegraphics[height=7.5cm,width=9.5cm]{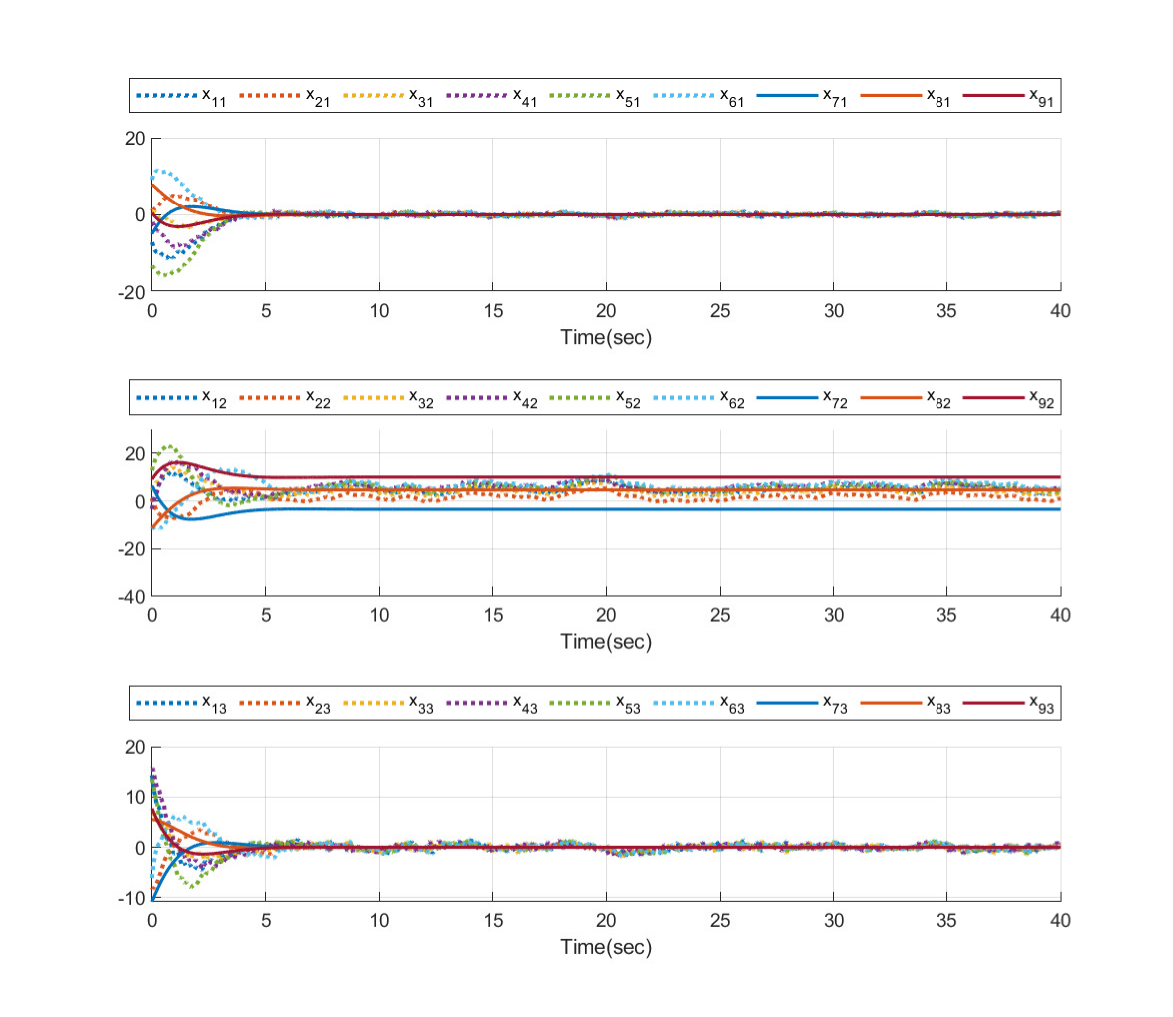}
	\end{minipage}
	\caption{Trajectories of the state $\bm{x_f}$ and $\bm{x_l}$ (with disturbances)  using our proposed protocol (left plot) and using the protocol in \cite{li2017cooperative} (right plot)}
	\label{figure4.31}
\end{figure*}
\begin{figure*}[t]
	\centering
	\begin{minipage}[t]{0.49\textwidth}
		\includegraphics[height=7.5cm,width=9.5cm]{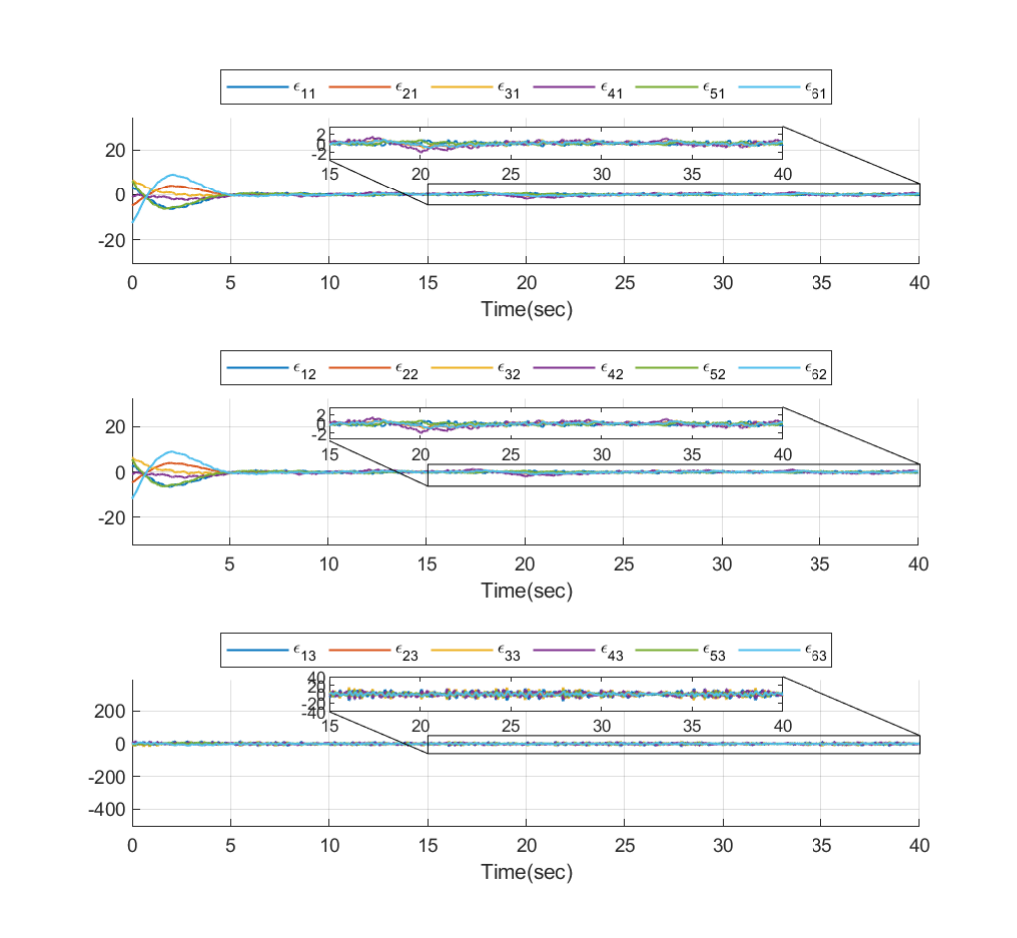}
	\end{minipage}
	\begin{minipage}[t]{0.49\textwidth}
		\includegraphics[height=7.5cm,width=9.5cm]{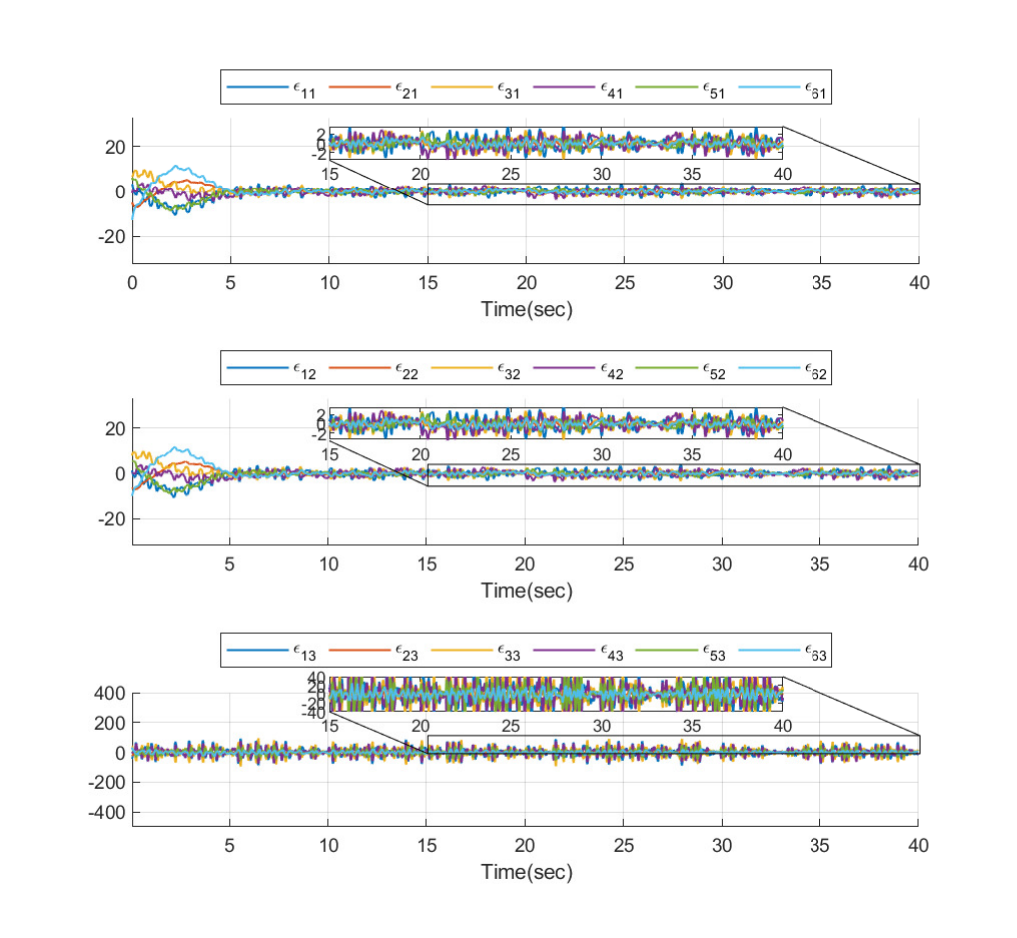}
	\end{minipage}
	\caption{ Trajectories of the performance output variables $\bm{\epsilon}$ (with disturbances) using our proposed protocol (left plot) and  using the protocol in \cite{li2017cooperative} (right plot)}
	\label{figure4.32}
\end{figure*}

As an illustrative example, we have randomly selected the initial states of the followers to be $x_{10} = [-7.09\ -0.11\ -14.33]^\top$, $x_{20} = [1.70\ 1.20\ -7.97]^\top$, $x_{30} = [0.74\ 4.5\ 6.47]^\top$,   $x_{40} = [-2.09\ -3.39\ 15.62]^\top$, $x_{50} = [-13.14\ 12.81\ 13.58]^\top$, $x_{60} = [9.18\ -11.76\ -6.11]^\top$, the initial states of the leaders to be $x_{70} = [-4.97\ 6.49\ -10.83]^\top$, $x_{80} = [7.98\ -11.29\ 5.5]^\top$, $x_{90} = [0.47\ 9.06\ 7.68]^\top$. Additionally, we set the initial values of the protocol state $\bm{w_f}$ to be zero and randomly choose the disturbance matrix to be $E = \begin{bmatrix}
0.25 & 0 & -0.21\\
0.19 & 0 & 0.07\\
-0.01 & 0 & 0.04 \end{bmatrix}$. 
To complete this scenario, a white noise signal $\bm {d}$ was applied, with amplitudes ranging between -15 and 15. The trajectories of the states $\bm{x_f}$ and $\bm{x_l}$ by using our designed dynamic protocol and the protocol in \cite{li2017cooperative}  are plotted in Figure~\ref{figure4.31}, and the corresponding trajectories of the performance output variable $\bm{\epsilon}$ are plotted in Figure \ref{figure4.32}. It can be seen that our protocol performs better than the dynamic protocol in \cite{li2017cooperative}, in the sense that our dynamic protocol has a better tolerance for external disturbances.

\subsection{Example for output containment control of heterogeneous linear multi-agent systems}
In this subsection, we  give a simulation example to illustrate the performance of our designed protocol \eqref{heter_dynamicprotocol}. 
Consider a heterogeneous multi-agent system consisting of $N = 9$ agents with three leaders of the form \eqref{hleaders_dynamic} and six followers of the form   \eqref{hfollowers_dynamic}, where 
$S = 
\begin{bmatrix}
	0 & 1 \\
	0 & 0
\end{bmatrix}$,
$R = 
\begin{bmatrix}
	1 & 1 \\
	0 & 1
\end{bmatrix}$,
$A_i =
\begin{bmatrix}
     0 & 1 & 0 \\
	 0 & 0 & c_i \\
	 0 & -f_i & -a_i \\
\end{bmatrix}$,
$B_i = 
\begin{bmatrix}
	0 \\
	0\\
	b_i
\end{bmatrix},$
$E_i = 
\begin{bmatrix}
	0 & 0.2\\
	0 & 0\\
	0 & 0.2
\end{bmatrix},$
$C_{2i} = 
\begin{bmatrix}
	1 & 1 & 0 \\
	0 & 0 & 0
\end{bmatrix}$,
$D_{2i} = 
\begin{bmatrix}
	 0 \\
	 1 
\end{bmatrix}$, $C_{1i} = 
\begin{bmatrix}
	1 & 2 & 1 \\
\end{bmatrix},$
$D_{1i} =
\begin{bmatrix}
	1 & 0
\end{bmatrix}$. The parameters $a_i, b_i, c_i$ and $f_i$ are chosen as 
\begin{equation*}
\begin{aligned}
a_i &= 2, ~c_i = 1, \quad i = 1,\dots, 6,\\
b_1 &= b_4 = 1, ~b_2 = b_5 = 2, ~b_3 = b_6 = 3,\\
f_1 &= f_4 = 1, ~f_2 = f_5 = 2, ~f_3 = f_6 = 3.
\end{aligned}
\end{equation*}
\begin{figure}[t]
	\centering
	\includegraphics[height=6cm]{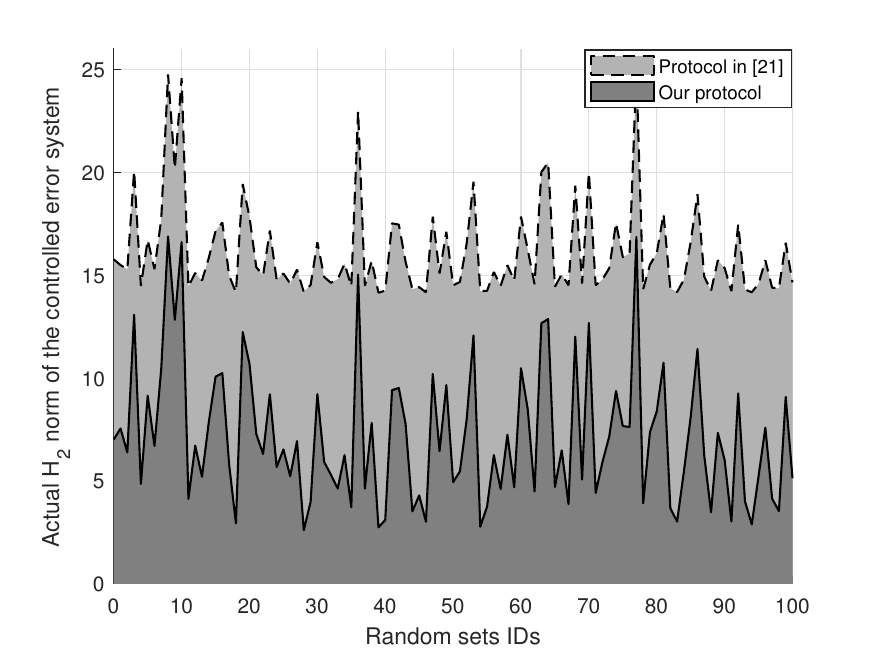}
	\caption{Actual ${H}_2$ norm of the controlled system by using two protocols with $100$ random sets of disturbance matrices $E_i$ for $i = 1,\dots, 6$.} \label{figure_statistical1}
\end{figure}
It is easy to check that the pair $\left(R, S\right)$ is observable, the eigenvalues of $S$ lie on the imaginary axis. 
 the pairs $({A_i}, {B_i})$ are stabilizable, and the pairs $({C_{1i}}, {A_i})$ are detectable.  We also have ${D_{1i}} {E_i}^{\top}  = 
\begin{bmatrix}
	0 & 0
\end{bmatrix}$, ${D_{2i}}^{\top}{C_{2i}}  = 
\begin{bmatrix}
	0 & 0 & 0
\end{bmatrix}$, ${D_{2i}}^{\top}{D_{2i}}=1,$ ${D_{1i}} {D_{1i}}^\top =1$, and the solutions of equations \eqref{regulation} are computed to be $\Pi_i = \begin{bmatrix}
	1 & 0\\
	0 & 1\\
	0 & 0
\end{bmatrix}$, $\Gamma_i = \begin{bmatrix} 1 & 0
\end{bmatrix}$ for $i = 1,\dots, 6$.

For illustration, the communication graph $\mathcal{G}$ between the agents is given by Figure \ref{graph4.31}, where nodes 7, 8, and 9 represent the leaders, and the other nodes represent the followers.
Correspondingly, the matrix $L_1$ of the Laplacian matrix is the same as \eqref{L1}.The largest eigenvalue of $L_1$ is computed to be $\lambda_6 = 5.8245$. 
Now we  use the method proposed in Theorem~\ref{thm2}  to design a   protocol \eqref{heter_dynamicprotocol} to solve the heterogeneous output containment problem while satisfying $J(F, G) < \gamma$. We let the desired upper bound for the cost be $\gamma = 115$. We use  Theorem \ref{thm2} to compute a solution $P_i > 0$  by solving 
\begin{equation*}\label{p_solution}
	A_i^{\top} P_i  + P_i A_i^{\top} - P_iB_iB_i^\top P_i + C_{2i}C_{2i}^\top + \delta  I_{n} = 0
\end{equation*}
with $\delta = 0.001$ and compute a solution $Q_i > 0 $ by solving
\begin{equation*}\label{q_solution}
	A_i Q_i + Q_i A_i^{\top} - Q_i C_{1i}^{\top} C_{1i} Q_i + E_i E_i ^{\top}+\eta I_{n} = 0.
\end{equation*}
with  $\eta = 0.001$, which are sufficiently small to minimize the $H_2$ performance upper bound $\gamma$ in the sense as explained in Remark \ref{remark1}. Then, in Matlab with the command \texttt{icare}, we compute  the feedback gain matrices $F_i$ and $G_i$ for $i = 1,\dots, 6$ to be 
\begin{equation*}
\begin{aligned}
F_1 &= F_4 = [-1.0005\ -1.7329\ -0.7326],\\
F_2 &= F_5 = [-1.0005\ -1.2345\ -0.4951],\\
F_3 &= F_6 = [-1.0005\ -1.0327\ -0.3982],
\end{aligned}
\end{equation*}
and 
\begin{equation*}
\begin{aligned}
G_1 &= G_4 = [0.3051\ ~0.0433\ ~0.0157]^\top,\\
G_2 &= G_5 = [0.2712\ ~0.0265\ ~0.0121]^\top,\\
G_3 &= G_6 = [0.2534\ ~0.0187\ ~0.0111]^\top.
\end{aligned}
\end{equation*}
Furthermore, we compute $S_i:= [{\rm tr}\left(C_{1i} Q_i P_i Q_iC_{1i}^\top \right) +{\rm tr}\left(C_{2i} Q_i  C_{2i}^\top\right)]$ for $i = 1,\dots, 6$ and obtain that 
\begin{figure*}[t]
	\centering
	\begin{minipage}[t]{0.49\textwidth}
		\centering
		\includegraphics[height=7.5cm,width=9.5cm]{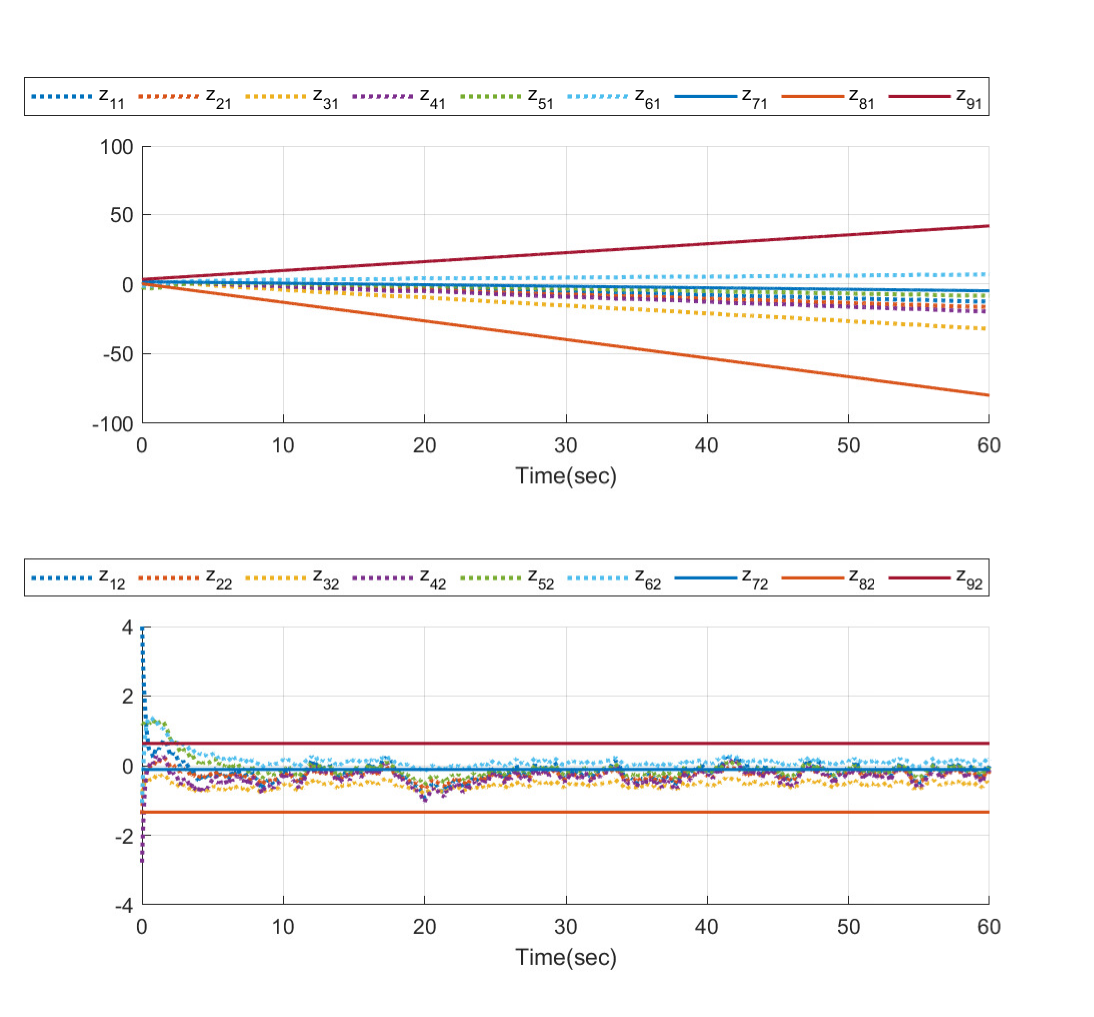}
	\end{minipage}
	\begin{minipage}[t]{0.49\textwidth}
		\centering
		\includegraphics[height=7.5cm,width=9.5cm]{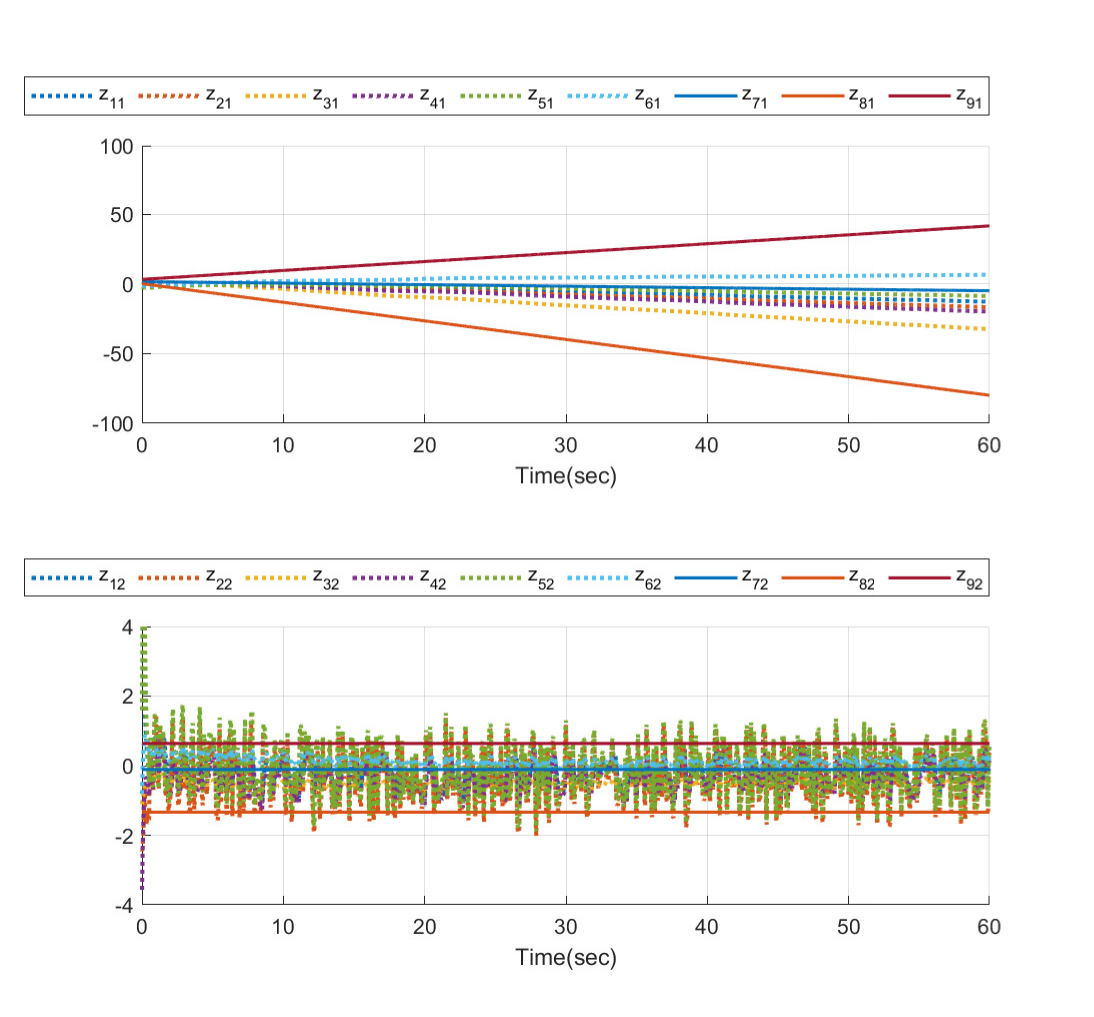}
	\end{minipage}
	\caption{Trajectories of the state $\bm{z_f}$ and $\bm{z_l}$ (with disturbances) using our proposed protocol (left plot) and using the protocol in \cite{qin2018output} (right plot)}
	\label{figure2}
\end{figure*}
\begin{figure*}[t]
	\centering
	\begin{minipage}[t]{0.49\textwidth}
		\centering
		\includegraphics[height=7.5cm,width=9.5cm]{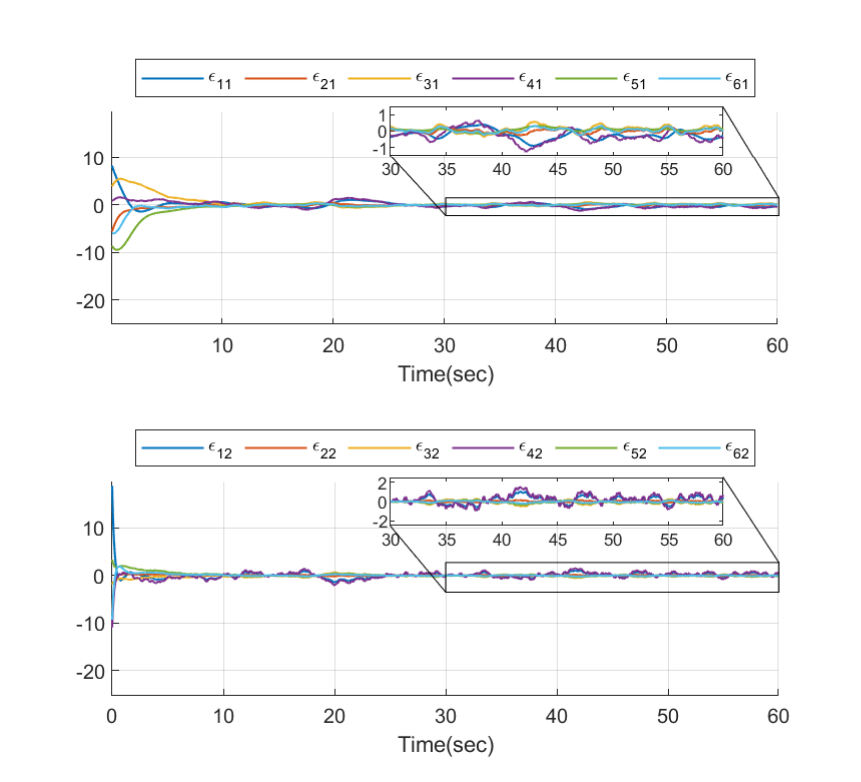}
	\end{minipage}
	\begin{minipage}[t]{0.49\textwidth}
		\centering
		\includegraphics[height=7.5cm,width=9.5cm]{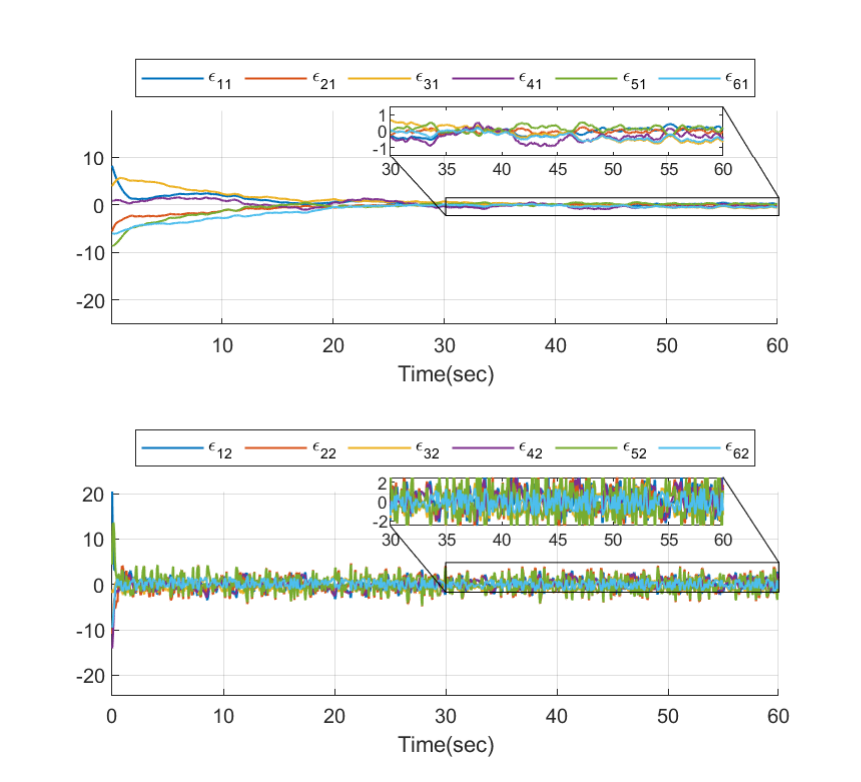}
	\end{minipage}
	\caption{ Trajectories of the performance output variables $\bm{\epsilon}$ (with disturbances) using our proposed protocol (left plot) and using the protocol in \cite{qin2018output} (right plot)}
	\label{figure3}
\end{figure*}%
\begin{equation*}
\begin{aligned}
S_1 = S_4 = 0.5630,~ S_2 = S_5 = 0.3917,~ S_3 = S_6 = 0.3350.
\end{aligned}
\end{equation*}
Note that, $S_i < \frac{\gamma}{6\lambda_{6}^2} = 0.5650$ for all $i = 1,\dots, 6$. Now, the actual ${H}_2$ norm of the controlled error system \eqref{compact_heter_containment} can be computed by using the command \texttt{norm(sys,2)} in Matlab as
\begin{equation*}
	||T_{F, G}||_{{H}_2} = 6.2937,
\end{equation*}
which is indeed smaller than $ \sqrt{\gamma} = \sqrt{115} = 10.7238$.

Next, we compare the performance of our   protocol with that of the proposed protocol in \cite{qin2018output}, where the associated actual ${H}_2$ norm of the controlled error system \eqref{compact_heter_containment} is computed to be
\begin{equation*}
	||\bar{T}_{F, G}||_{{H}_2} = 15.1388.
\end{equation*} It can be seen that the performance of the dynamic protocol in \cite{qin2018output} is not comparable to that of our dynamic protocol since its associated actual ${H}_2$ norm is much bigger than that of our protocol, i.e., $||\bar{T}_{F, G}||_{{H}_2} = 15.1388 > ||T_{F, G}||_{{H}_2} = 6.2937$.

To statistically analyze the performance of our protocol in comparison to the protocol in \cite{qin2018output}, we introduced the disturbance matrices $E_i =
\begin{bmatrix}
 0 & e_1\\
 0 & e_2\\
 0 & e_3
\end{bmatrix}$ with random values $||e_1||\leq 0.5,$  $||e_2||\leq 0.5$ and $||e_3||\leq 0.5$, Note that these disturbance matrices $E_i$ satisfies the conditions ${D_{1i}} {E_i}^{\top} =[0\ 0]$ for $i = 1,\dots, 6$. By introducing 100 random sets of $E_i$ for $i = 1,\dots, 6$, we compare the actual ${H}_2$ norm of the controlled system \eqref{compact_heter_containment} using both protocols. The comparison results are plotted in Figure~\ref{figure_statistical1}, which shows that the actual ${H}_2$ norm when using our protocol is significantly smaller than that using the protocol in  \cite{qin2018output} for all random sets. In other words, our protocol outperforms that proposed in \cite{qin2018output}.

As an illustrative example, we have randomly selected the initial states of the followers to be 
$x_{10} = [2.58\ -0.82\ -1.99]^\top$, 
$x_{20} = [ -0.99\ 0.49\ 2.28]^\top$, 
$x_{30} = [1.52\ -0.06\ 1.29]^\top$, 
$x_{40} = [2.12\ -1.23\ 1.59]^\top$, 
$x_{50} = [-0.51\ -1.62\ -1.72]^\top$,
$x_{60} = [-0.74\ -0.26\ -1.1]^\top$, the initial states of the leaders to be
$x_{70} =
[1.89\ -0.11]^\top$,
$x_{80} = [1.63\ -1.34]^\top$,
$x_{90} = [2.76\ 0.64]^\top$. We take the initial states $w_i$ to be zero, and the initial states $v_i$ to be $v_{10} = [-0.34\  1.67]^\top$, 
$v_{20} = [2.28\ -1.64]^\top$, $v_{30} = [0.14\ -0.46]^\top$, 
$v_{40} = [1.23\ -1.47]^\top$, 
$v_{50} = [-1.03\ 1.01]^\top$,
$v_{60} = [-0.54\ -0.26]^\top$. Additionally, we randomly choose the disturbance matrices to be $E_i =
\begin{bmatrix}
0 & 0.06\\
0 & 0.18\\
0 & 0.36
\end{bmatrix}$ for $i = 1, \dots, 6$, 
 and apply the same white noise $\bm d$ with an amplitude between $-2$ and $2$ to further compare the performance of our proposed protocol and of the protocol proposed in \cite{qin2018output}. In  Figure \ref{figure2}, we have plotted the trajectories of the output variable $\bm{z_f}$ and $\bm{z_l}$  using our designed protocol and the protocol in \cite{qin2018output}. In Figure \ref{figure3}, we have plotted the associated trajectories of the performance output variable $\bm{\epsilon}$.
The figures show that our proposed dynamic protocol performs better than the dynamic protocol in \cite{qin2018output} in the sense that our protocol has a better tolerance for external disturbances.

\section{Conclusion}\label{sec_conclusion}
In this paper, we have studied the ${H}_2$ suboptimal state containment control problem of homogeneous linear multi-agent systems and the   ${H}_2$ suboptimal output containment control problem of heterogeneous linear multi-agent systems. For both problems, we have proposed a distributed dynamic output feedback control protocol to achieve  $H_2$ suboptimal state/output containment control, i.e., the states/outputs of the followers converge to the convex hull spanned by the states/outputs of the leaders
and the associated ${H}_2$  cost is smaller than a given bound.

\balance

 \bibliographystyle{IEEEtran}   
	\bibliography{ref}  
\end{document}